\documentclass[12pt]{article}
\usepackage[utf8]{inputenc}
\usepackage[a4paper, margin=1in]{geometry}
\usepackage{mathtools}
\mathtoolsset{showonlyrefs=true}
\usepackage{amsthm, amssymb, bbm, mathrsfs}
\usepackage{bm}
\usepackage{graphicx}
\usepackage{float}
\usepackage[colorlinks=true, linkcolor=blue, urlcolor=cyan, citecolor=green]{hyperref}
\usepackage[dvipsnames]{xcolor}
\usepackage{algorithm2e}
\theoremstyle{plain}
\newtheorem{theorem}{Theorem}[section]

\newtheorem{lemma}[theorem]{Lemma}
\newtheorem{remark}[theorem]{Remark}

\newtheorem{cor}[theorem]{Corollary}

\newtheorem{prop}[theorem]{Proposition}
\numberwithin{equation}{section}
\numberwithin{figure}{section}

\newcommand{\E}{\mathbb{E}}

\newcommand{\p}{\partial}

\newcommand{\beas}{\begin{eqnarray*}}
\newcommand{\eeas}{\end{eqnarray*}}
\newcommand{\bal}{\begin{align}}
\newcommand{\eal}{\end{align}}
\newcommand{\bas}{\begin{align*}}
\newcommand{\eas}{\end{align*}}
\newcommand{\bea}{\begin{eqnarray}}
\newcommand{\eea}{\end{eqnarray}}
\newcommand{\tmop}{\end{eqnarray}}
\newcommand{\ben}{\begin{enumerate}}
\newcommand{\een}{\end{enumerate}}

\newcommand{\ui}{\mathrm{i}}

\newcommand{\dm}{\diamond}

\newcommand{\cB}{\mathcal{B}}

\newcommand{\cL}{\mathcal{L}}
\newcommand{\cF}{\mathcal{F}}

\newcommand{\mF}{\mathbb{F}}

\newcommand{\cG}{\mathcal{G}}

\newcommand{\He}{\mathrm{He}}

\newcommand{\tmF}{\tilde {\mathbb{F}}}

\newcommand{\cO}{\mathcal{O}}

\newcommand{\bi}{\begin{itemize}}
\newcommand{\ei}{\end{itemize}}
\newcommand{\beq}{\begin{equation}}
\newcommand{\eeq}{\end{equation}}
\newcommand{\bv}{\begin{verbatim}}
\newcommand{\ev}{\end{verbatim}}

\newcommand{\ee}[1]{ {\mathbb{E}\left[{#1}\right]}}

\usepackage{tikz}
\newcommand{\tkz}{\tikzexternaldisable}
\usepackage{mhequ} 
\usetikzlibrary{snakes}
\usetikzlibrary{decorations}
\usetikzlibrary{positioning}
\usetikzlibrary{shapes}
\usetikzlibrary{external}
\makeatletter
\pgfdeclareshape{crosscircle}
{
  \inheritsavedanchors[from=circle] 
  \inheritanchorborder[from=circle]
  \inheritanchor[from=circle]{north}
  \inheritanchor[from=circle]{north west}
  \inheritanchor[from=circle]{north east}
  \inheritanchor[from=circle]{center}
  \inheritanchor[from=circle]{west}
  \inheritanchor[from=circle]{east}
  \inheritanchor[from=circle]{mid}
  \inheritanchor[from=circle]{mid west}
  \inheritanchor[from=circle]{mid east}
  \inheritanchor[from=circle]{base}
  \inheritanchor[from=circle]{base west}
  \inheritanchor[from=circle]{base east}
  \inheritanchor[from=circle]{south}
  \inheritanchor[from=circle]{south west}
  \inheritanchor[from=circle]{south east}
  \inheritbackgroundpath[from=circle]
  \foregroundpath{
    \centerpoint%
    \pgf@xc=\pgf@x%
    \pgf@yc=\pgf@y%
    \pgfutil@tempdima=\radius%
    \pgfmathsetlength{\pgf@xb}{\pgfkeysvalueof{/pgf/outer xsep}}%
    \pgfmathsetlength{\pgf@yb}{\pgfkeysvalueof{/pgf/outer ysep}}%
    \ifdim\pgf@xb<\pgf@yb%
      \advance\pgfutil@tempdima by-\pgf@yb%
    \else%
      \advance\pgfutil@tempdima by-\pgf@xb%
    \fi%
    \pgfpathmoveto{\pgfpointadd{\pgfqpoint{\pgf@xc}{\pgf@yc}}{\pgfqpoint{-0.707107\pgfutil@tempdima}{-0.707107\pgfutil@tempdima}}}
    \pgfpathlineto{\pgfpointadd{\pgfqpoint{\pgf@xc}{\pgf@yc}}{\pgfqpoint{0.707107\pgfutil@tempdima}{0.707107\pgfutil@tempdima}}}
    \pgfpathmoveto{\pgfpointadd{\pgfqpoint{\pgf@xc}{\pgf@yc}}{\pgfqpoint{-0.707107\pgfutil@tempdima}{0.707107\pgfutil@tempdima}}}
    \pgfpathlineto{\pgfpointadd{\pgfqpoint{\pgf@xc}{\pgf@yc}}{\pgfqpoint{0.707107\pgfutil@tempdima}{-0.707107\pgfutil@tempdima}}}
  }
}
\makeatother

\def\X{\tikz[baseline=-2.8,scale=0.15]{\node[X] {};}} 
\def\M{\tikz[baseline=-2.8,scale=0.15]{\node[M] {};}} 

\def\MXd{\tikz[baseline=-1,scale=0.15]{\draw (-1,1) node[M] {} -- (0,0) node[not] {} -- (1,1) node[X] {};}} 

\def\MMd{\tikz[baseline=-1,scale=0.15]{\draw (-1,1) node[M] {} -- (0,0) node[not] {} -- (1,1) node[M] {};}} 

\def\MXdXd{\tikz[baseline=-1,scale=0.15]{
\draw (0,0) node[not] {} -- (-1,1) node[not] {}
-- (-2,2) node[M]{} ;
\draw (0,0) -- (1,1) node[X] {};
\draw (-1,1) -- (0,2) node[X] {};
}}

\def\MMdXd{\tikz[baseline=-1,scale=0.15]{
\draw (0,0) node[not] {} -- (-1,1) node[not] {}
-- (-2,2) node[M]{} ;
\draw (0,0) -- (1,1) node[X] {};
\draw (-1,1) -- (0,2) node[M] {};
}}

\def\MXdXdXd{\tikz[baseline=-1,scale=0.15]{
\draw (0,0) node[not] {} -- (-1,1) node[not] {}
-- (-2,2) node[not]{}  -- (-3,3) node[M]{};
\draw (0,0) -- (1,1) node[X] {};
\draw (-1,1) -- (0,2) node[X] {};
\draw (-2,2) -- (-1,3) node[X] {};
}}

\def\MXdMd{\tikz[baseline=-1,scale=0.15]{
\draw (0,0) node[not] {} -- (-1,1) node[not] {}
-- (-2,2) node[M]{} ;
\draw (-1,1) -- (0,2) node[X] {};
\draw (0,0) -- (1,1) node[M] {};
}}

\def\MMdMd{\tikz[baseline=1,scale=0.15]{
\draw (0,0) node[not] {} -- (-1,1) node[not] {}
-- (-2,2) node[M]{} ;
\draw (0,0) -- (1,1) node[M] {};
\draw (-1,1) -- (0,2) node[M] {};
}}

\def\MMdXdXd{\tikz[baseline=1,scale=0.15]{
\draw (0,0) node[not] {} -- (-1,1) node[not] {}
-- (-2,2) node[not]{}  -- (-3,3) node[M]{};
\draw (-2,2) -- (-1,3) node[M] {};
\draw (-1,1) -- (0,2) node[X] {};
\draw (0,0) -- (1,1) node[X] {};
}}

\def\MXdMdXd{\tikz[baseline=1,scale=0.15]{
\draw (0,0) node[not] {} -- (-1,1) node[not] {}
-- (-2,2) node[not]{}  -- (-3,3) node[M]{};
\draw (-2,2) -- (-1,3) node[X] {};
\draw (-1,1) -- (0,2) node[M] {};
\draw (0,0) -- (1,1) node[X] {};
}}

\def\MXdXdMd{\tikz[baseline=1,scale=0.15]{
\draw (0,0) node[not] {} -- (-1,1) node[not] {}
-- (-2,2) node[not]{}  -- (-3,3) node[M]{};
\draw (-2,2) -- (-1,3) node[X] {};
\draw (-1,1) -- (0,2) node[X] {};
\draw (0,0) -- (1,1) node[M] {};
}}

\def\MXdMXdd{\tikz[baseline=1,scale=0.15]{\draw (0,0) node[not] {} -- (-1,1) node[not] {};
\draw (0,0) -- (1,1) node[not] {};
\draw (-1,1) -- (-1.5,2.5) node[M] {};
\draw (-1,1) -- (-0.5,2.5) node[X] {};
\draw (1,1) -- (0.5,2.5) node[M] {};
\draw (1,1) -- (1.5,2.5) node[X] {};}}

\def\MXdXdXdXd{\tikz[baseline=1,scale=0.15]{
\draw (0,0) node[not] {} -- (-1,1) node[not] {}
-- (-2,2) node[not]{} -- (-3,3) node[not]  {} 
-- (-4,4) node[M]{};
\draw (-3,3) -- (-2,4) node[X] {};
\draw (-2,2) -- (-1,3) node[X] {};
\draw (-1,1) -- (0,2) node[X] {};
\draw (0,0) -- (1,1) node[X] {};
}}

\colorlet{symbols}{blue!90!black}
\colorlet{testcolor}{green!60!black}
\colorlet{connection}{red!30!black}

\tikzset{
root/.style={circle,fill=black!50,inner sep=0pt, minimum size=3mm},
        dot/.style={circle,fill=black,inner sep=0pt, minimum size=1.5mm},
        A/.style={very thin,circle,fill=BlueGreen!100,draw=black,inner sep=0pt,minimum size=1.2mm}, 
        X/.style={very thin,circle,fill=Gray!80,draw=black,inner sep=0pt,minimum size=1.2mm}, 
        Z/.style={very thin,circle,fill=RubineRed!100,draw=black,inner sep=0pt,minimum size=1.2mm}, 
        M/.style={very thin,circle,fill=YellowOrange!100,draw=black,inner sep=0pt,minimum size=1.2mm}, 
not/.style={thin,circle,fill=symbols,draw=connection,fill=connection,inner sep=0pt,minimum size=0.35mm},
	>=stealth,
        }

\newcommand{\btau}{\bm{\tau}}
\begin{document}
\title{\textbf{Demystifying the Bergomi–Guyon expansion}}
\author{Florian Bourgey\footnote{NYU, fb2615@nyu.edu}, Jim Gatheral\footnote{Baruch College, CUNY, jim.gatheral@baruch.cuny.edu}}
\date{\today}
\maketitle

\begin{abstract}
Al\`os, Gatheral and Radoi\v{c}i\'c derived the Bergomi--Guyon
expansion of the implied variance smile from the forest expansion of the
cumulant generating function. Its coefficients are sums of products of
diamond trees, with prefactors that are polynomials in the log-strike
$k$. Matching moments order by order produces, at order $\epsilon^\ell$,
terms of degree greater than $\ell$ in $k$ that mysteriously  cancel.

We show that, in suitable variables, the matching condition can be formulated as a
nonlinear heat equation. The resulting recursion computes the prefactor
of each product of trees from those of products of fewer trees, without
generating the higher-degree terms; its only model-independent input is
a cumulant series, which we compute in closed form. Consequently,
at order $\epsilon^\ell$ every prefactor has degree exactly $\ell$ in
$k$. The prefactors are universal and need only be computed once.
Code and coefficients are provided.
\end{abstract}

\noindent\textbf{Keywords.} Bergomi--Guyon expansion, implied volatility,
stochastic volatility, forest expansion, cumulants, Hermite polynomials.

\medskip
\noindent\textbf{MSC 2020.} 91G20, 60G44, 91G60.


\section{Introduction}\label{sec:intro}
Lewis \cite{lewis2000option}, and subsequently Bergomi and Guyon
\cite{bergomi2012stochastic} (BG), compute option prices to second order in
volatility of volatility and then invert them to obtain an expansion of
the implied volatility smile. Both find that the second-order
coefficient is quadratic in the log-strike $k$, even though the
intermediate computation contains cubic and quartic terms.
Jacquier and Lorig \cite{jacquier2015characteristic} bypass the option
price by passing algorithmically from the characteristic function directly to the smile. Al\`os, Gatheral and Radoi\v{c}i\'c
\cite{alos2020exponentiation} (AGR) likewise start from the cumulant
generating function (CGF) and expand it in forests of diamond trees, thereby
associating smile coefficients directly with trees. This forest expansion
was subsequently generalized by Friz, Gatheral and Radoi\v{c}i\'c
\cite{friz2022forests}.
In the AGR computation, the same phenomenon reappears
\cite[Remark A.1]{alos2020exponentiation}: ``we again observe the
mysterious cancellation of coefficients of $k^3$ and $k^4$.''

In this note, we replace the computation that produces the cancellation
by one in which these higher-degree terms never arise. The passage from
the cumulant generating function to the smile is governed by a single
operator that, in the right variables, generates Hermite polynomials. In
those variables, the matching condition becomes a nonlinear backward heat
equation and yields a recursion that determines the prefactor of each
product of trees from the prefactors of products of fewer trees. Its only
model-independent input is the cumulant series of that operator, which we
compute in closed form. This recursion is stated in Theorem
\ref{thm:recursion}, the main result of the note; the moments in which
the cancellation occurs never appear in the recursion. Passing from those
moments to their cumulants performs these cancellations once and for all.

Section \ref{sec:example} revisits the second-order example of
\cite[eq.~(A4)]{alos2020exponentiation} using both methods. In the
recursion, no quartic term appears; in the direct computation, the
quartic terms appear and cancel.

Proposition \ref{prop:degree} and Corollary \ref{cor:degree} show that
this pattern repeats at every order: the prefactor of a product of
$j$ trees of total order $\ell$ is a polynomial in the log-strike $k$ of
degree exactly $\ell$, rather than the $\ell+2(j-1)$ degree that a
direct computation would produce.

Appendix \ref{app:BG4} displays the expansion through order four;  coefficients through order six, together with code implementing the algorithm and a tutorial,
are available in the
GitHub repository
\url{https://github.com/fbourgey/bergomi-guyon/}.
\section{The forest expansion and the smile}\label{sec:setup}
We work with forward variance models of the form
\begin{align}
\frac{dS_t}{S_t} &= \sqrt{V_t}\left(\rho\, dW_t + \sqrt{1-\rho^2}\, dW^\perp_t\right),\notag\\
d\xi_t(u) &= f_t(\xi)\, g(u-t)\, dW_t,
\label{eq:SVmodel}
\end{align}
where $\E_t[\cdot] = \E[\cdot|\cF_t]$, $X = \log S$, $\xi_t(u):=\E_t[V_u]$ for $t\leq u$, so that $\xi_t(t)=V_t$, and $V_t\,dt=d\langle X\rangle_t$. Here, $f_t(\xi)$ is an adapted scalar functional of the forward variance curve,
and $g$ is a deterministic kernel. We set
$X_{t,T}:=X_T-X_t$. The Brownian
motions $W$ and $W^\perp$ are independent, and $\rho\in[-1,1]$. For two
continuous semimartingales $A$ and $B$, the diamond product is
\[
(A\dm B)_t(T):=\E_t\left[\langle A,B\rangle_T\right]-\langle A,B\rangle_t,
\]
and
\[
M_t(T):=\E_t\left[\langle X\rangle_T-\langle X\rangle_t\right]
=\int_t^T\xi_t(u)\,du
\]
is the variance contract. When $t=0$, we drop the dependence on $t$ and
write $(A\dm B)(T):=(A\dm B)_0(T)$ and $M(T):=M_0(T)$. We also suppress
the time and maturity arguments in the tree notation. 
From \cite[Corollary 3.1]{alos2020exponentiation},
the CGF has the forest expansion
\beq
\psi(T;u) := \log \ee{e^{\ui u X_{0,T}}}
= - \frac{1}{2}u\,(u+\ui)\,M(T) + \sum_{\ell=1}^\infty\,\tilde{\mF}_\ell(u),
\label{eq:CGFmF}
\eeq
\tkz
where $\tmF_0=-\tfrac12 u(u+\ui)\,\M$ and, for $\ell>0$,
\beq
\tmF_\ell=\frac12\,\sum_{j=0}^{\ell-2}\,\bigl(\tmF_{\ell-2-j} \dm \tmF_j\bigr)
+\ui\, u\, \bigl(X \dm \tmF_{\ell-1}\bigr).
\label{eq:tmFrecursion}
\eeq
The diamond product is commutative but not associative, and is
represented graphically by \emph{root joining}: $\tau_1\dm\tau_2$ is the
tree whose root carries the two subtrees $\tau_1$ and $\tau_2$
\cite[Section 1.2]{friz2022forests}. Iterating \eqref{eq:tmFrecursion}
from $\tmF_0$ therefore generates binary trees whose leaves are labelled
either \M{}, standing for the variance contract $M$, or \X{}, standing
for the log-price $X$; thus $\MXd = M\dm X$, $\MMd = M\dm M$, and
$\MXdXd = (M\dm X)\dm X$. Since $\dm$ is not associative, the bracketing
matters: a tree records not only which leaves occur, but how they are
paired.

Each leaf \M{} contributes a factor $-\tfrac12 u(u+\ui)$ and each leaf
\X{} a factor $\ui u$. The matching condition \eqref{eq:formalEqn} below
involves the CGF not at $u$ but at the shifted argument $u-\ui/2$, the
shift that symmetrizes the Fourier integral; there these two factors
become $-\tfrac12(u^2+\tfrac14)$ and $\ui u+\tfrac12$, which are the
polynomials
\[
\lambda_a:=\tfrac12\,a\,(a-1)
\qquad\text{and}\qquad a ,
\]
evaluated at $a=\ui u + \tfrac12$. We therefore record the forest
expansion at the shifted argument. Writing $\mathcal T_\ell$ for the set
of trees generated at order $\ell$, and $m$ and $x$ for the numbers of
leaves \M{} and \X{} of a tree $\tau$,
\beq
\tmF_\ell(u-\ui/2)=\sum_{\tau\in\mathcal T_\ell}
w_\tau\,\lambda_a^{\,m}\,a^{\,x}\;\tau ,
\label{eq:forestweights}
\eeq
which defines the weight $w_\tau>0$ of each tree. The $w_\tau$ are
symmetry factors: in \eqref{eq:tmFrecursion} an unordered pair of
distinct subtrees is produced twice, cancelling the factor $\tfrac12$,
whereas a pair of identical subtrees is produced once. Hence
$w_\tau=2^{-s(\tau)}$, where $s(\tau)$ is the number of nodes of $\tau$
whose two subtrees coincide; for example $w_{\MXdXd}=1$ and
$w_{\MMd}=\tfrac12$. For more on the combinatorics of these trees, see
\cite[Section 1.2 and Remark 1.1]{friz2022forests}.

We write
\beq
r_\tau(a):=\lambda_a^{\,m-1}\,a^{x},
\label{eq:rtau}
\eeq
so that the contribution of $\tau$ to $\psi(T;u-\ui/2)$ is
$w_\tau\,\lambda_a\,r_\tau(a)\,\tau$; thus, up to its symmetry factor,
$\tau$ enters the CGF only through its leaf content $(m,x)$. One factor
of $\lambda_a$ is held back
because, at $a=\tfrac12+\ui u$, it cancels the denominator of the
Fourier measure $du/(u^2+\tfrac14)$ in \eqref{eq:formalEqn} below, up to
the constant $-\tfrac12$.

The \emph{order} of a tree $\tau\in\mathcal T_\ell$ is $\ell$; orders add
under products, independently of the number of tree factors. A tree of
order $\ell$ with $m$ leaves $\M$ and $x$ leaves $\X$ satisfies
\begin{equation}
\ell=2m+x-2.
\label{eq:treegrading}
\end{equation}

For fixed maturity $T$, let $\Sigma(k):=T\,\sigma_{\mathrm{BS}}^2(k,T)$
denote the total implied
variance, and let $k=\log(K/S_0)$ denote the log-strike.
Regarding \eqref{eq:CGFmF} as
a formal power series in $\epsilon$ whose power counts the forest index
$\ell$, we write the Bergomi--Guyon expansion of the total implied
variance smile as
\beq
\Sigma(k) =  \sum_{\ell=0}^\infty\,\epsilon^\ell\, a_\ell(k).
\label{eq:BG}
\eeq
Its coefficients are determined by the matching condition of
\cite[Equation (5.7)]{gatheral2006volatility}. After setting
$a_0(k)=M$, this condition becomes
\begin{multline}
\int_0^\infty\frac{du}{u^2+\frac{1}{4}}
\,e^{- \frac{1}{2}\bigl(u^2+\tfrac14\bigr)\,M}\,\Re \Bigl[e^{-\ui uk}\,
\,\exp\left\{
 \sum_{\ell=1}^\infty\,\epsilon^\ell\,\tilde{\mF}_\ell(u-\ui/2)
\right\}
\Bigr] \\
=
\int_0^\infty\frac{du}{u^2+\frac{1}{4}}\,e^{- \frac{1}{2}\bigl(u^2+\tfrac14\bigr)\,M}\,
\,\Re\Bigl[e^{-\ui uk}\,\exp\Bigl\{-\frac{1}{2}\bigl(u^2+\tfrac{1}{4}\bigr) \,
\sum_{\ell \geq 1}\,\epsilon^\ell \, a_\ell(k)
\Bigr\}\Bigr],
\label{eq:formalEqn}
\end{multline}
matched order by order in $\epsilon$.
\subsection{The form of the BG expansion}
Following \cite[eq.~(A2)]{alos2020exponentiation}, let
$I_0$ denote the Gaussian kernel
\[
I_0(k)=\int_0^\infty du\,
\Re\Bigl[e^{-\ui uk}\,e^{-\frac12(u^2+\frac14)M}\Bigr]
=\sqrt{\frac\pi2}\;\frac{e^{-M/8}}{\sqrt M}\;e^{-k^2/2M}.
\]
Also, differentiating $I_0$ $j$ times gives
\[(-\p_k)^{j}\,I_0(k)=
\int_0^\infty du\,
\Re\left[e^{-\ui uk}\,e^{-\frac12(u^2+\frac14)M}\,(\ui u)^{j} \right]
=:I_0(k) \, \tilde I_{j}(k),
\]
where
\[
\tilde I_j(k):=\frac{(-\p_k)^{j}\,I_0(k)}{I_0(k)}
=\frac{1}{M^{j/2}}\,\mathrm{He}_{j}\Bigl(\frac k{\sqrt M}\Bigr),
\]
and $\mathrm{He}_j$ denotes the probabilists' Hermite polynomial of
degree $j$ \cite[Chapter 22]{abramowitz1964handbook}. 
The coefficients $a_\ell(k)$ in \eqref{eq:BG} are linear combinations of trees and products of trees; we call the polynomial multiplying a tree or product of trees $\btau$ in $a_\ell$ its \emph{prefactor}, and denote it $c_{\btau}$. The prefactors are linear combinations of the $\tilde I_j$. For example,  at
second order, the BG expansion reads
\begin{equation}\label{eq:BG2}
\begin{aligned}
\Sigma(k)
&=
\M +\epsilon  \left(\frac{k}{M}+\frac{1}{2}\right)\, \MXd
+
\frac{1}{4} \epsilon^2 \left(\frac{k^2}{M^2}-\frac{1}{M}-\frac{1}{4}\right)\,\MMd
\\
&+\epsilon^2 \left(\frac{k^2}{M^2}+\frac{k}{M}-\frac{1}{M}+\frac{1}{4}\right)\,\MXdXd
+\frac{ \epsilon^2 }{4 M}\,\left(-\frac{5 k^2}{M^2}-\frac{2 k}{M}+\frac{3}{M}+\frac{1}{4}\right)\,(\MXd)^2
+\cO(\epsilon^3).
\end{aligned}
\end{equation}
%
\begin{lemma}\label{lem:singleTree}
For a single tree $\tau$, expand
$r_\tau(a)=\sum_j\varrho_j\,(a-\tfrac12)^j$.  Then
\begin{equation}
c_\tau(k)=w_\tau\sum_j\varrho_j\,\tilde I_j(k).
\label{eq:c_tau}
\end{equation}
Equivalently, $c_\tau$ is obtained from $w_\tau r_\tau$ by expanding in
powers of $a-\tfrac12$ and replacing each power $(a-\tfrac12)^j$ by
$\tilde I_j$.
\end{lemma}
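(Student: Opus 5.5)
The plan is to extract, from the matching condition \eqref{eq:formalEqn}, the coefficient of $\epsilon^\ell$ that is linear in a single tree $\tau\in\mathcal T_\ell$. On the left side, expanding $\exp\{\sum_\ell \epsilon^\ell\tmF_\ell(u-\ui/2)\}$, the only term linear in $\tau$ at order $\epsilon^\ell$ is the tree's own contribution $w_\tau\lambda_a r_\tau(a)\,\tau$ from \eqref{eq:forestweights}. Products of lower-order trees contribute to products of trees, and I will treat distinct trees and their products as formally independent symbols. On the right side, expanding $\exp\{-\tfrac12(u^2+\tfrac14)\sum\epsilon^\ell a_\ell(k)\}$, the only term at order $\epsilon^\ell$ linear in a single tree of order $\ell$ is $-\tfrac12(u^2+\tfrac14)\,c_\tau(k)\,\tau$. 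Any other term involving $\tau$ would need $\tau$ multiplied by something of positive order, which is not linear in $\tau$. Here $c_\tau$ is independent of $u$ and can be pulled outside the integral.

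Next I match the integrands. At $a=\ui u+\tfrac12$ we have $\lambda_a=-\tfrac12(u^2+\tfrac14)$, so both sides carry this common factor, and it cancels the Fourier weight $1/(u^2+\tfrac14)$ up to the same constant $-\tfrac12$. The $\tau$-coefficient of the condition therefore reads
\[
w_\tau\int_0^\infty du\,\Re\Bigl[e^{-\ui uk}e^{-\frac12(u^2+\frac14)M}\,r_\tau(\ui u+\tfrac12)\Bigr]
= c_\tau(k)\,\int_0^\infty du\,\Re\Bigl[e^{-\ui uk}e^{-\frac12(u^2+\frac14)M}\Bigr]
= c_\tau(k)\,I_0(k).
\]
To handle the left side, I write $r_\tau(a)=\sum_j\varrho_j(a-\tfrac12)^j=\sum_j\varrho_j(\ui u)^j$. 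Since $r_\tau$ is a polynomial, this is a finite sum. Each term then gives $\varrho_j(-\p_k)^jI_0(k)=\varrho_j I_0\tilde I_j$ by the definition of $\tilde I_j$. Dividing by $I_0>0$ yields \eqref{eq:c_tau}.

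One subtlety needs checking: the $\varrho_j$ must be real, because otherwise $\Re$ would not commute with multiplication by $\varrho_j$. This holds because $r_\tau=\lambda_a^{m-1}a^x$ has rational coefficients in $a$, and a re-expansion about the real point $\tfrac12$ keeps them real.

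The main obstacle is the bookkeeping step that justifies isolating the linear-in-$\tau$ coefficient. One must argue that trees and products of trees can be treated as algebraically independent formal symbols, so that coefficients may be matched separately. One must also check that no term on the right other than $c_\tau\tau$ contributes a single factor $\tau$ at this order. I would settle this by an order count using \eqref{eq:treegrading}: every factor in a product has order at least $1$, and orders add under products.
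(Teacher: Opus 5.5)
Your proposal is correct and follows essentially the same route as the paper. It isolates the part of \eqref{eq:formalEqn} linear in $\tau$, cancels the common factor $\lambda_a=-\tfrac12(u^2+\tfrac14)$ against the Fourier measure, expands $r_\tau(\tfrac12+\ui u)=\sum_j\varrho_j(\ui u)^j$, integrates each power to $I_0\,\tilde I_j$, and divides by $I_0$. Your extra checks are points the paper's short proof leaves implicit: the reality of the $\varrho_j$ (so that $\Re$ commutes with them), and the treatment of trees and their products as independent formal symbols, justified by the order count.
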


\begin{proof}
Matching the terms of \eqref{eq:formalEqn} linear in trees, the factor
$\lambda_a$ common to both sides cancels against the measure
$du/(u^2+\tfrac14)$, since $\lambda_a=-\tfrac12(u^2+\tfrac14)$ at
$a=\tfrac12+\ui u$, leaving the polynomial
$w_\tau r_\tau(\tfrac12+\ui u)=w_\tau\sum_j\varrho_j\,(\ui u)^j$. Since
$(\ui u)^j e^{-\ui uk}=(-\p_k)^j e^{-\ui uk}$, each power $(\ui u)^j$
integrates to $\tilde I_j\,I_0$, as in
\cite[Appendix A.2.1]{alos2020exponentiation}, and dividing by $I_0$ gives
\eqref{eq:c_tau}.
\end{proof}
\subsection{The Bergomi--Guyon operator}\label{sec:operator}
The substitution of Lemma \ref{lem:singleTree} is generated by a single
operator. For a polynomial $q\in\mathbb{R}[a]$, define
\begin{equation}
\cB[q](k):=\frac{q\bigl(\tfrac12-\p_k\bigr)\,I_0(k)}{I_0(k)}.
\label{eq:BDef}
\end{equation}
In particular,
\[
\cB\bigl[(a-\tfrac12)^j\bigr](k)
=\frac{(-\p_k)^j I_0(k)}{I_0(k)}
=\tilde I_j(k).
\]
Consequently, Lemma \ref{lem:singleTree} gives
$c_\tau(k)=w_\tau\,\cB[r_\tau](k)$ for every tree $\tau$: up to the
symmetry factor $w_\tau$, the \emph{Bergomi--Guyon operator} (or
\emph{BG operator}) $\cB$ maps the polynomial multiplying a tree in the
cumulant generating function \eqref{eq:CGFmF} to its prefactor in the
smile expansion \eqref{eq:BG}.

\begin{lemma}\label{lem:Hermite}
In the variables $\zeta:=\frac12+\frac k M$ and $\theta:=\frac{1}{M}$,
the BG operator can be written as
\begin{equation}
\cB[q](\zeta,\theta)=\bigl(e^{-\frac\theta2\,\p_{\zeta}^{2}}\,q\bigr)(\zeta) = \sum_{j\geq0}\frac{(-\theta/2)^j}{j!}\,q^{(2j)}(\zeta).
\label{eq:Bheat}
\end{equation}
\end{lemma}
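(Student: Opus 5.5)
The plan is to compute $\cB$ on a generating family of polynomials and then extend by linearity. Because $\cB$ is linear in $q$, and both sides of \eqref{eq:Bheat} are linear in $q$, it is enough to check the identity on the monomials $q(a)=a^n$. It is simpler still to check it on exponentials $q(a)=e^{sa}$ and read off the monomials as coefficients of $s^n/n!$. Both sides are entire functions of $s$, so extracting coefficients is legitimate.

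The first step is to compute $\cB[e^{sa}]$. Formally, $e^{s(\frac12-\p_k)}I_0(k)=e^{s/2}I_0(k-s)$, since $e^{-s\p_k}$ is the shift operator. On polynomials this is exactly Taylor's theorem applied to the entire function $I_0$, so taking the $s^n$-coefficient keeps everything rigorous. Using the explicit Gaussian form of $I_0$,
\[
\frac{I_0(k-s)}{I_0(k)}=\exp\Bigl(\frac{2ks-s^2}{2M}\Bigr)=\exp\Bigl(\frac{sk}{M}-\frac{\theta s^2}{2}\Bigr).
\]
Multiplying by $e^{s/2}$ gives
\[
\cB[e^{sa}]=\exp\Bigl(s\bigl(\tfrac12+\tfrac kM\bigr)-\tfrac{\theta}{2}s^2\Bigr)=e^{-\theta s^2/2}\,e^{s\zeta}.
\]

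The second step compares this with the right-hand side of \eqref{eq:Bheat}. For $q(a)=e^{sa}$ we have $\p_\zeta^2 e^{s\zeta}=s^2e^{s\zeta}$, so $e^{-\frac\theta2\p_\zeta^2}e^{s\zeta}=e^{-\theta s^2/2}e^{s\zeta}$. This agrees with the first step. Expanding both sides in $s$ and matching the coefficients of $s^n/n!$ gives
\[
\cB[a^n]=\sum_{j}\frac{(-\theta/2)^j}{j!}\,\frac{n!}{(n-2j)!}\,\zeta^{n-2j}=\sum_j\frac{(-\theta/2)^j}{j!}\,(a^n)^{(2j)}\big|_{a=\zeta}.
\]
Linearity then extends the identity to every $q\in\mathbb R[a]$. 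For polynomials the series terminates, so $e^{-\frac\theta2\p_\zeta^2}q$ is a finite sum.

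There is also a direct check that avoids generating functions. Take $q=(a-\tfrac12)^j$, where $\cB[q]=\tilde I_j=M^{-j/2}\He_j(k/\sqrt M)$. The known operator identity $\He_j(x)=e^{-\p_x^2/2}x^j$ becomes, after rescaling $x=(\zeta-\tfrac12)\sqrt M$, the statement $M^{-j/2}\He_j(\ldots)=e^{-\frac\theta2\p_\zeta^2}(\zeta-\tfrac12)^j$. This is consistent with the generating-function argument.

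The main obstacle is justifying the formal shift $e^{-s\p_k}I_0(k)=I_0(k-s)$. This is handled by working coefficientwise in $s$: the coefficient of $s^n$ on the left is $(-\p_k)^nI_0/n!$, and that is exactly the Taylor coefficient of the entire function $I_0(k-s)$. No convergence issue arises beyond the analyticity of the Gaussian.
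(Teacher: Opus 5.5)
Your proof is correct, but it takes a different route from the paper's. The paper argues at the level of operators. It conjugates by $I_0$, using only the logarithmic derivative $\p_k I_0/I_0=-k/M$, to obtain the intermediate form $\cB[q]=q(\zeta-\theta\,\p_\zeta)\,1$. It then applies Hadamard's lemma, which stops after one commutator, to identify $\zeta-\theta\,\p_\zeta$ with $e^{-\frac\theta2\p_\zeta^2}\,\zeta\,e^{\frac\theta2\p_\zeta^2}$; acting on the constant $1$ leaves the heat semigroup applied to $q$.

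You instead test the identity on the exponential family $e^{sa}$. There the shift $e^{-s\p_k}$ and the explicit Gaussian form of $I_0$ give $\cB[e^{sa}]=e^{s\zeta-\theta s^2/2}$ at once. Since $e^{s\zeta}$ is an eigenfunction of $\p_\zeta^2$, the heat semigroup multiplies it by the same factor, and coefficient extraction in $s$ plus linearity finish the proof. Your justification of the shift is adequate: $e^{s(\frac12-\p_k)}=e^{s/2}e^{-s\p_k}$ as formal series because the constant commutes with $\p_k$, and the Taylor series of the entire function $I_0$ converges for every $s$.

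Your route is more elementary, with no operator ordering and no commutator expansion. It gives the explicit formula for $\cB[a^n]$ directly and exposes the Hermite generating function $e^{s\zeta-\theta s^2/2}$. It is also the linear-exponent analogue of the computation $G(y)=\cB[e^{y\lambda_a}]$ that the paper later carries out in Proposition~\ref{prop:K}. The paper's route instead produces the representation $\cB[q]=q(\zeta-\theta\p_\zeta)\,1$ as a by-product and never needs the shifted Gaussian ratio.

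Your second check, via $\He_j(x)=e^{-\p_x^2/2}x^j$ on the basis $(a-\tfrac12)^j$, is likewise a complete proof once combined with linearity.
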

\begin{proof}
For any function $f$,
\[
\bigl(I_0^{-1}\,\p_k\,I_0\bigr)f
=\frac{\p_k(I_0\,f)}{I_0}
=\p_k f+\frac{\p_k I_0}{I_0}\,f
=\Bigl(\p_k-\frac kM\Bigr)f ,
\]
since $\p_k I_0/I_0=-k/M$.  Conjugation $A\mapsto I_0^{-1}A\,I_0$ preserves
sums and products, hence
$I_0^{-1}\,q(\p_k)\,I_0=q(\p_k-k/M)$ for every polynomial
$q$ and, applying both sides to the constant function $1$,
\[
q(\p_k)\,I_0/I_0=q(\p_k-k/M)\,1.
\]
Substituting into \eqref{eq:BDef} gives
\[
\cB[q](k)= q\left(\tfrac12 -\p_k+\frac k M\right)1 = q(\zeta-\theta\,\p_\zeta)\,1.
\]
Hadamard's lemma expands a conjugation in iterated
commutators:
\beq
e^{A}\,B\,e^{-A}=B+[A,B]+\tfrac1{2!}\,\bigl[A,[A,B]\bigr]+\cdots.
\label{eq:Hadamard}
\eeq
Take $A=-\tfrac\theta2\p_{\zeta}^{2}$ and $B=\zeta$.  The commutator $\bigl[-\tfrac\theta2\p_{\zeta}^{2},\zeta\bigr]
=-\theta\,\p_\zeta$ commutes with $\p_\zeta$, so \eqref{eq:Hadamard} stops after one term and becomes
\[
e^{-\frac\theta2\p_{\zeta}^{2}}\;\zeta\;e^{+\frac\theta2\p_{\zeta}^{2}}
=\zeta-\theta\,\p_\zeta,
\]
and the result follows:
$q(\zeta-\theta\,\p_\zeta)\,1
=e^{-\frac\theta2\p_{\zeta}^{2}}\,q(\zeta)\,e^{+\frac\theta2\p_{\zeta}^{2}}\,1
=e^{-\frac\theta2\p_{\zeta}^{2}}\,q(\zeta)$.
\end{proof}
Up to a constant, $I_0$ is a Gaussian density in $k$ with variance $M$,
hence a Gaussian density in $\zeta=\tfrac12+k/M$ with variance
$1/M=\theta$. Accordingly,
$\cB$ maps shifted monomials to scaled probabilists' Hermite
polynomials:
\[
\cB\bigl[(a-\tfrac12)^j\bigr](\zeta,\theta)
=\theta^{j/2}\,\He_j\Bigl(\frac{\zeta-\tfrac12}{\sqrt\theta}\Bigr).
\]
This expression is a polynomial in $(\zeta,\theta)$ and therefore extends
to $\theta=0$. At $\zeta=\tfrac12+k/M$ and $\theta=1/M$, it equals
$\tilde I_j(k)$.
\subsubsection{Examples}
\label{sec:single2}
In terms of the operator $\cB$, it is straightforward to compute the prefactors $c_\tau=w_\tau\,\cB[r_\tau]$ of single trees:
\begin{align*}
\MXd:& \qquad \cB[a] = \zeta = \frac12+\frac k M,\\
\MXdXd:& \qquad \cB[a^2] = \zeta^2 - \theta = \left(\frac12+\frac k M\right)^2- \frac 1 M = \frac{k^2}{M^2}+ \frac k M -\frac 1 M + \frac 14,\\
\MMd:&\qquad \tfrac12\,\cB[\tfrac12 a\,(a-1)] = \frac14 \,\left(\zeta^2 - \theta  - \zeta \right)=\frac14  \left(  \frac{k^2}{M^2} -\frac 1 M - \frac 14  \right),
\end{align*}
where $w_{\MXd}=w_{\MXdXd}=1$ and $w_{\MMd}=1/2$.  These prefactors
are all in agreement with \eqref{eq:BG2}.  Note in particular how simple
they look in terms of $\zeta$ and $\theta$.
\subsubsection{Choice of variables}
From now on we work in the variables
$(\zeta,\theta)$ of Lemma \ref{lem:Hermite}. At each order in $\epsilon$,
the coefficients considered below are polynomials in $\zeta$ and
$\theta$ whose coefficients are trees and products of trees. We regard $\theta$ as a formal
parameter, which allows evaluation at $\theta=0$. We sometimes display
the arguments explicitly, writing, for example,
$\Sigma(\zeta,\theta)$, and sometimes write simply $\Sigma$.
\subsection{The matching condition in operator form}
\begin{prop}\label{prop:matching}
Let
\begin{equation}
\tilde \psi(a):=\sum_{\ell\ge1}\epsilon^\ell\sum_{\tau\in\mathcal{T}_\ell}w_\tau\,\lambda_a\,r_\tau(a)\,\tau,
\qquad
\tilde \Sigma(\zeta,\theta):=\sum_{\ell\ge1}\epsilon^\ell\,a_\ell(\zeta,\theta),
\label{eq:gdef}
\end{equation}
so that $\psi(T;u-\ui/2)=\lambda_a\,M+\tilde \psi(a)$ at
$a=\tfrac12+\ui u$, and $\tilde \Sigma(\zeta,\theta) = \Sigma(\zeta,\theta)-M$.  Since every
tree carries at least one leaf $\M$, hence at least one factor
$\lambda_a$, the quotient $\tilde \psi/\lambda_a$ is again a polynomial
in $a$.  The matching condition \eqref{eq:formalEqn} can be rewritten as
\begin{equation}
\cB\Bigl[\frac{e^{\tilde \psi(a)}-1}{\lambda_a}\Bigr]
\;=\;\sum_{n\ge1}\frac{{\tilde \Sigma(\zeta,\theta)}^n}{n!}\;\cB\bigl[\lambda_a^{\,n-1}\bigr] .
\label{eq:matchingexp}
\end{equation}
\end{prop}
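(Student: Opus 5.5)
The plan is to transform both sides of \eqref{eq:formalEqn} separately into the form $\int_0^\infty du\,\Re[e^{-\ui uk}e^{-\frac12(u^2+\frac14)M}\,P(\ui u)]$ with $P$ a polynomial, apply the identity $(\ui u)^j e^{-\ui uk}=(-\p_k)^j e^{-\ui uk}$ exactly as in the proof of Lemma~\ref{lem:singleTree}, and then recognise the definition \eqref{eq:BDef} of $\cB$. Throughout, $a=\tfrac12+\ui u$, so $\lambda_a=-\tfrac12(u^2+\tfrac14)$ and the measure is $du/(u^2+\tfrac14)=-\tfrac{1}{2\lambda_a}\,du$. Everything is treated as a formal power series in $\epsilon$: at each fixed order only finitely many trees contribute, and each coefficient is a polynomial in $a$.

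\emph{Left-hand side.} The exponent is $\sum_\ell\epsilon^\ell\tmF_\ell(u-\ui/2)=\tilde\psi(a)$ by \eqref{eq:forestweights}. Write $e^{\tilde\psi}=1+(e^{\tilde\psi}-1)$. The constant $1$ produces an $\epsilon$-independent term, and the same term appears on the right-hand side (from the $1$ in $e^{\lambda_a\tilde\Sigma}$), so the two cancel. For the remainder, every term of $e^{\tilde\psi}-1$ is a product of at least one tree, and each tree carries a factor $\lambda_a$. Hence $(e^{\tilde\psi}-1)/\lambda_a$ is, order by order in $\epsilon$, a polynomial $Q(a)$, and the integrand becomes $-\tfrac12\,\Re[e^{-\ui uk}e^{-\frac12(u^2+\frac14)M}Q(\tfrac12+\ui u)]$. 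Expanding $Q$ in powers of $a-\tfrac12=\ui u$ and using the derivative identity, the integral equals $-\tfrac12\,Q(\tfrac12-\p_k)I_0(k)$, which is $-\tfrac12\,I_0\,\cB[Q]$.

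\emph{Right-hand side.} The exponent is $\lambda_a\tilde\Sigma$, where $\tilde\Sigma$ depends on $k$ but not on $u$. Expanding, $e^{\lambda_a\tilde\Sigma}-1=\sum_{n\ge1}\tilde\Sigma^n\lambda_a^n/n!$, so after cancelling against the measure the integrand becomes $-\tfrac12\sum_n \frac{\tilde\Sigma(k)^n}{n!}\Re[e^{-\ui uk}e^{-\frac12(u^2+\frac14)M}\lambda_a^{n-1}]$. Here lies the main subtlety: $\tilde\Sigma$ depends on the integration-external variable $k$, not on $u$. It can therefore be pulled out of the $u$-integral as a coefficient, and it must not be differentiated by $\p_k$. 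Concretely, one first performs the $u$-integral at a frozen value of the coefficients $\tilde\Sigma$ and only then sets $k$ equal to the same value, which is precisely how the matching condition \eqref{eq:formalEqn} is to be read. This gives $-\tfrac12\,I_0\sum_n\frac{\tilde\Sigma^n}{n!}\cB[\lambda_a^{n-1}]$.

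Dividing both sides by $-\tfrac12 I_0(k)\neq0$ yields \eqref{eq:matchingexp}. By Lemma~\ref{lem:Hermite}, both sides are then expressed in $(\zeta,\theta)$.

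Two further checks remain. The first is that taking $\Re$ is harmless: the polynomials involved have real coefficients in $a$, the $\tau$ are real, and the integrand in $u$ combines with its conjugate in the standard way, so this is routine given Lemma~\ref{lem:singleTree}. The second is that interchanging the formal $\epsilon$-expansion with the integral is legitimate order by order. At each order only finitely many polynomial-times-Gaussian integrals occur, so this is also routine.
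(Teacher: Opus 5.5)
Your proposal is correct and follows essentially the same route as the paper: subtract the zeroth-order terms, expand both exponentials, use $1/(u^2+\tfrac14)=-1/(2\lambda_a)$ to cancel one factor of $\lambda_a$, pull the $u$-independent powers of $\tilde\Sigma$ outside the $u$-integral, convert $(\ui u)^j$ into $(-\p_k)^j$ acting on $I_0$, and divide by $-\tfrac12 I_0$. Your extra remarks on taking real parts and on exchanging the $\epsilon$-expansion with the integral order by order are sound, and only spell out what the paper leaves implicit.
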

\begin{proof}
Proceed as in the proof of Lemma \ref{lem:singleTree} but keep the
nonlinear terms: subtract the zeroth-order terms, expand the exponentials,
and rewrite the Fourier measure using
$1/(u^2+\tfrac14)=-1/(2\lambda_a)$ at $a=\tfrac12+\ui u$.
Cancel the common factor $-\tfrac12$,
substitute $u^ne^{-\ui uk}=(\ui\p_k)^ne^{-\ui uk}$, and divide by
$I_0$.  The powers of $\tilde\Sigma$, which depend on $k$ but not on $u$,
pass outside the $u$-integral.
\end{proof}

\begin{remark}\label{rem:thetazero}
At $\theta=0$, the operator $\cB$ in \eqref{eq:Bheat} maps any polynomial $q(a)$ to $q(\zeta)$, which amounts to the substitution
$a\mapsto\zeta$. Then \eqref{eq:matchingexp} becomes
\[
\frac {e^{\tilde \psi(\zeta)}-1}{\lambda_\zeta}=
\sum_{n\ge1}\frac{{\tilde \Sigma(\zeta,0)}^n}{n!}\;\lambda_\zeta^{\,n-1}
=
\frac {e^{\lambda_\zeta\,\tilde \Sigma(\zeta,0)}-1}{\lambda_\zeta}.
\]
Both exponents have zero constant term in $\epsilon$; applying the formal
logarithm gives
\beq
\tilde \Sigma(\zeta,0) = \frac{\tilde \psi(\zeta)}{\lambda_\zeta},
\label{eq:thetazero}
\eeq
which will serve as the initial condition for the recursion of Theorem
\ref{thm:recursion} below.  Since $\tilde \psi$ is linear in the trees,
so is $\tilde \Sigma(\zeta,0)$. At $\theta=0$, the prefactor of every
product of two or more trees vanishes.
\end{remark}
\section{A recursion algorithm}\label{sec:recursion}
Equation \eqref{eq:matchingexp} already determines the prefactors of
trees and products of trees recursively, as in the Bell polynomial
algorithm of \cite{alos2020exponentiation}.  The LHS of
\eqref{eq:matchingexp} involves the natural generalization of $r_\tau(a)$
to products of trees.  At order $\epsilon^\ell$, the moments
$\cB[\lambda_a^{n-1}]$ on the RHS are found to cancel terms of order
higher than $k^\ell$, as in the second-order example of Section
\ref{sec:example}.  In this section, we derive a more efficient
algorithm that exhibits none of these higher-order cancellations.
\subsection{Moments and cumulants}
We start by noticing that the RHS of \eqref{eq:matchingexp} looks like an
integral of a moment generating function.  We are then led to define the
following (formal) moment generating function and the corresponding
cumulant generating function:
\[
G(y;\zeta,\theta):=\cB\bigl[e^{y\lambda_a}\bigr]
=\sum_{s\ge0}\frac{y^s}{s!}\,\cB[\lambda_a^s],
\qquad
K(y;\zeta,\theta):=\log G(y;\zeta,\theta)
=\sum_{m\ge1}K_m(\zeta,\theta)\,y^m .
\]
We refer to the $\cB[\lambda_a^s]$ as formal moments and to the $K_m$ as
normalized cumulants. Note that no underlying positive probability law is implied and the normalization of $K$ has no factor $m!$.
Define
\[
\cG(y;\zeta,\theta):=\sum_{r\ge1}\frac{y^r}{r!}\;\cB\bigl[\lambda_a^{\,r-1}\bigr],
\]
so that $G = \p_y\cG$.  In terms of $\cG$, the matching condition
\eqref{eq:matchingexp} becomes
\begin{equation}
\cB\Bigl[\frac{e^{\tilde \psi(a)}-1}{\lambda_a}\Bigr]
=\cG\bigl(\tilde \Sigma(\zeta,\theta);\zeta,\theta\bigr)
=\int_0^{\tilde \Sigma(\zeta,\theta)} e^{K(y;\zeta,\theta)}\,dy .
\label{eq:matchingG}
\end{equation}
The following proposition gives a closed-form expression for the CGF $K(y;\zeta,\theta)$.
\begin{prop}\label{prop:K}
With $\kappa:=\zeta-\tfrac12=\frac k M$,
\begin{equation}
K(y;\zeta,\theta) =\frac{y\,\kappa^2}{2\,(1+\theta \,y)}-\frac12\,\log(1+\theta\, y)-\frac y8 ,
\label{eq:Kclosed}
\end{equation}
so that $K_1(\zeta,\theta)=\lambda_\zeta-\tfrac\theta2$ and
\begin{equation}\label{eq:Km}
K_m(\zeta,\theta)=(-\theta)^{m-1}\Bigl(\frac{\kappa^2}{2}-\frac\theta{2 m}\Bigr),
\qquad m\ge2 .
\end{equation}
\end{prop}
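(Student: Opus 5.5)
By Lemma \ref{lem:Hermite}, $\cB$ acts as $e^{-\frac\theta2\p_\zeta^2}$ on polynomials in $\zeta$. Formally, this is the expectation $\E[q(\zeta+\sqrt{-\theta}\,Z)]$ with $Z$ standard normal, i.e.\ a Gaussian integral with negative variance. The plan is to evaluate $G(y;\zeta,\theta)=\cB[e^{y\lambda_a}]$ in closed form as a Gaussian integral of the exponential of a quadratic. We do this first for $\theta<0$, where the heat semigroup is a genuine Gaussian expectation, and then extend to formal $\theta$ by polynomial identity: each coefficient of $y^s$ is a polynomial in $(\zeta,\theta)$.

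First rewrite the exponent around the centre of the Gaussian. Since $\lambda_a=\tfrac12 a(a-1)=\tfrac12(a-\tfrac12)^2-\tfrac18$, set $b=a-\tfrac12$, so that $e^{y\lambda_a}=e^{-y/8}e^{yb^2/2}$. Applying $e^{-\frac\theta2\p^2}$ and evaluating at $\zeta$ (so $b\mapsto\kappa$) gives
\[
G=e^{-y/8}\,\E\Bigl[e^{\frac y2(\kappa+\sigma Z)^2}\Bigr],\qquad \sigma^2=-\theta .
\]
The standard identity $\E[e^{c(\mu+\sigma Z)^2}]=(1-2c\sigma^2)^{-1/2}\exp\{c\mu^2/(1-2c\sigma^2)\}$ applies with $c=y/2$, $\mu=\kappa$, $\sigma^2=-\theta$, so $1-2c\sigma^2=1+\theta y$. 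Taking logarithms yields \eqref{eq:Kclosed} immediately.

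The step that needs care is justifying the formal identity; I expect this to be the main obstacle, though only mildly. I would argue that both sides are formal power series in $y$ whose coefficients are polynomials in $(\kappa,\theta)$. For $\theta<0$ and $|y|$ small, the Gaussian integral converges, and the expectation of a polynomial equals $e^{-\frac\theta2\p^2}$ applied to it (heat semigroup with variance $-\theta$). Hence the coefficients agree on an open set of $(\kappa,\theta)$, and therefore identically. Alternatively, one can avoid probability altogether by checking that $G$ from the closed form satisfies the backward heat equation $\p_\theta G=-\tfrac12\p_\zeta^2G$ with $G(y;\zeta,0)=e^{y\lambda_\zeta}$, which characterizes $e^{-\frac\theta2\p_\zeta^2}e^{y\lambda_\zeta}$ coefficientwise in $y$.

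Finally, expand in $y$. Using $\frac{y}{1+\theta y}=\sum_{m\ge1}(-\theta)^{m-1}y^m$ and $-\tfrac12\log(1+\theta y)=\sum_{m\ge1}\frac{(-\theta)^m}{2m}y^m$, the coefficient of $y^m$ is
\[
\frac{\kappa^2}{2}(-\theta)^{m-1}+\frac{(-\theta)^m}{2m}=(-\theta)^{m-1}\Bigl(\frac{\kappa^2}{2}-\frac\theta{2m}\Bigr),
\]
plus $-\tfrac18$ when $m=1$. This gives \eqref{eq:Km}, and for $m=1$ it gives $\tfrac12\kappa^2-\tfrac18-\tfrac\theta2=\lambda_\zeta-\tfrac\theta2$.
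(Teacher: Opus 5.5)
Your proof is correct, but you establish the key step, the Mehler-type identity for $e^{-\frac\theta2\p_\kappa^2}e^{\frac12 y\kappa^2}$, by a different route from the paper's.

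\textbf{The paper's route.} It states the closed form and verifies it. Both sides satisfy $\p_\theta F=-\tfrac12\p_\kappa^2F$ and agree at $\theta=0$, so they coincide coefficientwise as formal power series in $y$. This is exactly the fallback you mention.

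\textbf{Your route.} You derive the formula instead. For $\theta<0$ you realize $e^{-\frac\theta2\p_\zeta^2}$ as a genuine Gaussian expectation with variance $-\theta$. You then complete the square in $\E\bigl[e^{c(\mu+\sigma Z)^2}\bigr]$, and extend to all $\theta$ because each $y$-coefficient on both sides is a polynomial in $(\kappa,\theta)$.

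\textbf{One step to spell out.} You should state the interchange explicitly. For $|y|<1/|\theta|$ one has $\E\bigl[e^{|y|(\kappa+\sigma Z)^2/2}\bigr]<\infty$. Tonelli and Fubini then let you expand $\E\bigl[e^{\frac y2(\kappa+\sigma Z)^2}\bigr]$ termwise in $y$. This identifies its Taylor coefficients with $\frac1{s!}\,e^{-\frac\theta2\p_\kappa^2}(\kappa^2/2)^s$, which are the formal coefficients of $e^{y/8}G$.

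\textbf{What each buys.} Your approach explains where the closed form comes from and needs no guess, at the cost of this small amount of analysis. The paper's approach stays entirely within formal power series with polynomial coefficients, but it presupposes the answer. It also leaves two things implicit: the check that the right-hand side solves the heat equation, and the uniqueness of polynomial-in-$\theta$ solutions with given data at $\theta=0$. The final expansion of $\log G$ into the $K_m$ is the same in both.
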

\begin{proof}
Since $\lambda_\zeta=\tfrac12\,\left(\kappa^2-\tfrac14\right)$ and
$\p_\kappa =\p_\zeta$, we have
$
e^{y\,\lambda_\zeta} = e^{-\frac y 8}\,e^{\frac12 y\,\kappa^2}
$
and,
by \eqref{eq:Bheat} and the definition of $G$,
\[
G(y)=\cB\left[e^{y\,\lambda_a} \right]=
e^{-\frac\theta2\p_{\zeta}^{2}}\,e^{y\,\lambda_\zeta}=
e^{-y/8}\,e^{-\frac\theta2\p_{\kappa}^{2}}e^{\frac12\,y\,\kappa^2}.
\]
Now consider the following identity of formal power series in $\alpha$:
\begin{equation}
e^{-\frac\theta 2\,\p_{\kappa}^{2}}\,e^{\frac12\,\alpha\, \kappa^2}
=\frac 1{(1+\theta\,\alpha)^{1/2}}\,
\exp\left\{\frac{\alpha\,\kappa^2}{2\,(1+\theta\,\alpha)}\right\}.
\label{eq:mehler}
\end{equation}
Both sides satisfy
$\p_\theta F=-\tfrac12\,\p_{\kappa}^{2}F$ and have the same value at
$\theta=0$; hence they agree as formal power series in $\alpha$.
Setting $\alpha=y$ and taking the logarithm gives
\eqref{eq:Kclosed}.
Finally, expanding $(1+\theta y)^{-1}$ and $\log(1+\theta y)$
in $y$ gives the $K_m$.
\end{proof}
\noindent
\begin{cor}\label{cor:cumulants}
$K(y;\zeta,0)=y\,\lambda_\zeta$, and $K_m(\zeta,\theta)=\cO(\theta^{m-1})$
for every $m\ge1$.
\end{cor}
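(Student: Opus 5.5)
Both assertions follow directly from Proposition \ref{prop:K}, so the plan is simply to read them off the closed form \eqref{eq:Kclosed} and the coefficient formula \eqref{eq:Km}.

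\emph{The case $\theta=0$.} Set $\theta=0$ in \eqref{eq:Kclosed}. Then $1+\theta y=1$, so the logarithmic term vanishes and
\[
K(y;\zeta,0)=\frac{y\,\kappa^2}{2}-\frac y8=y\cdot\tfrac12\bigl(\kappa^2-\tfrac14\bigr).
\]
Since $\kappa=\zeta-\tfrac12$, we have $\tfrac12(\kappa^2-\tfrac14)=\tfrac12\zeta(\zeta-1)=\lambda_\zeta$, which gives $K(y;\zeta,0)=y\,\lambda_\zeta$. As an independent check, Remark \ref{rem:thetazero} shows that $\cB$ at $\theta=0$ is the substitution $a\mapsto\zeta$. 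Hence $G(y;\zeta,0)=e^{y\lambda_\zeta}$, and taking the formal logarithm gives the same result.

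\emph{The bound $K_m=\cO(\theta^{m-1})$.} For $m\ge2$, formula \eqref{eq:Km} writes $K_m$ as $(-\theta)^{m-1}$ times $\kappa^2/2-\theta/(2m)$. The second factor is a polynomial in $(\zeta,\theta)$, so $K_m=\cO(\theta^{m-1})$. For $m=1$ the claim is that $K_1=\cO(\theta^0)$, i.e.\ that $K_1$ is a polynomial. This holds because $K_1=\lambda_\zeta-\theta/2$.

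\emph{Where care is needed.} The only step that needs attention is reading the coefficients $K_m$ as polynomials in $\theta$. This is legitimate because the expansions of $(1+\theta y)^{-1}$ and $\log(1+\theta y)$ in powers of $y$ have coefficients that are monomials in $\theta$: the coefficient of $y^m$ is proportional to $\theta^{m-1}$ in the first term and to $\theta^{m}$ in the second. The statement is therefore meaningful with $\theta$ treated as a formal parameter, as the paper does throughout.
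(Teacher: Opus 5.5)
Your proposal is correct and follows the paper's proof: both read $K(y;\zeta,0)=y\,\lambda_\zeta$ off the closed form \eqref{eq:Kclosed} at $\theta=0$, and read the factor $\theta^{m-1}$ off the coefficient expansion of Proposition \ref{prop:K}. Your extra cross-check via Remark \ref{rem:thetazero} is a valid addition. That remark says $\cB$ reduces to the substitution $a\mapsto\zeta$ at $\theta=0$, so $G(y;\zeta,0)=e^{y\lambda_\zeta}$.
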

\begin{proof}
Immediate from \eqref{eq:Kclosed}: at $\theta=0$ the right-hand side is
$\tfrac12\,y\,\kappa^2-\tfrac y8=y\,\lambda_\zeta$, and the expansion of
Proposition \ref{prop:K} shows that $K_m$ carries the overall factor
$\theta^{m-1}$.
\end{proof}
\subsection{The recursion}
By \eqref{eq:Bheat}, $\cB[q]$ satisfies the heat equation: for every
polynomial $q$, differentiating
$e^{-\frac\theta2\,\p_{\zeta}^{2}}\,q(\zeta)$ in $\theta$ brings down
$-\tfrac12\,\p_{\zeta}^{2}$, so
\beq
\bigl(\p_\theta+\tfrac12\,\p_{\zeta}^{2}\bigr)\,\cB[q]=0 .
\label{eq:Bharmonic}
\eeq
Both sides of the matching condition \eqref{eq:matchingG} are built,
order by order, from such terms.  This forces the relation between
$\tilde\Sigma$ and the normalized cumulants of Proposition \ref{prop:K} given in
the following theorem.
\begin{theorem}\label{thm:recursion}
In the variables $(\zeta,\theta)$ of Lemma \ref{lem:Hermite},
\begin{equation}
\bigl(\p_\theta+\tfrac12\p_{\zeta}^{2}\bigr)\,\tilde \Sigma
=-\,\p_\zeta \tilde \Sigma\,\left[\tfrac12\,\p_y K\,\p_\zeta \tilde \Sigma+\p_\zeta K\,\right]_{y=\tilde \Sigma(\zeta,\theta)} ,
\qquad
\tilde \Sigma(\zeta,0)=\frac{\tilde \psi(\zeta)}{\lambda_\zeta },
\label{eq:recursion}
\end{equation}
with $K(y;\zeta,\theta)$ as in Proposition \ref{prop:K}; in particular,
$K$ is independent of the model and of the trees.
\end{theorem}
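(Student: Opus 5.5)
Apply the heat operator $H:=\p_\theta+\tfrac12\p_\zeta^2$ to both sides of the matching condition \eqref{eq:matchingG}, viewed as an identity of formal power series in $\epsilon$ whose coefficients are polynomials in $(\zeta,\theta)$. The left-hand side is $\cB$ applied to a polynomial in $a$ (order by order in $\epsilon$, and with tree-valued coefficients that do not depend on $\zeta,\theta$). Hence by \eqref{eq:Bharmonic} it is annihilated by $H$. So $H\,\cG(\tilde\Sigma;\zeta,\theta)=0$, and the task is to expand this by the chain rule.

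Write $\cG(y;\zeta,\theta)$ with $\p_y\cG=G=e^K$. Each coefficient $\cB[\lambda_a^{r-1}]$ of $\cG$ in $y$ is itself heat-harmonic, so $H\cG(y;\cdot)=0$ for fixed $y$; the same holds for $G$. The chain rule then gives
\[
H\bigl[\cG(\tilde\Sigma)\bigr]=(H\cG)(\tilde\Sigma)+G(\tilde\Sigma)\,H\tilde\Sigma+(\p_\zeta G)(\tilde\Sigma)\,\p_\zeta\tilde\Sigma+\tfrac12(\p_yG)(\tilde\Sigma)\,(\p_\zeta\tilde\Sigma)^2 .
\]
The first term vanishes. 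The two mixed terms come from $\tfrac12\p_\zeta^2$: the cross term is $2\cdot\tfrac12\,\p_\zeta\p_y\cG\,\p_\zeta\tilde\Sigma=\p_\zeta G\,\p_\zeta\tilde\Sigma$, and the quadratic term is $\tfrac12\p_y^2\cG(\p_\zeta\tilde\Sigma)^2$. Now divide by $G=e^K$, which is invertible as a formal series since $G(0)=1$, and use $\p_\zeta G/G=\p_\zeta K$ and $\p_yG/G=\p_yK$. Setting the result equal to zero and rearranging gives exactly \eqref{eq:recursion}. The initial condition is Remark \ref{rem:thetazero}, and the model-independence of $K$ is Proposition \ref{prop:K}.

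The main obstacle is justifying these manipulations rigorously at the formal level. Three points need care.

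\begin{itemize}
\item Composition: $\tilde\Sigma$ has zero constant term in $\epsilon$, so $\cG(\tilde\Sigma)$, $G(\tilde\Sigma)$ and $K(\tilde\Sigma)$ are well-defined formal series in $\epsilon$. At each order only finitely many terms contribute, and each is a polynomial in $(\zeta,\theta)$. This also covers the $\log(1+\theta y)$ and $1/(1+\theta y)$ in \eqref{eq:Kclosed}, which are expanded in $y$.
\item Chain rule: differentiation in $\zeta$ and $\theta$ commutes with the $\epsilon$-expansion, so the chain rule holds coefficientwise.
\item The left-hand side: $\cB\bigl[(e^{\tilde\psi}-1)/\lambda_a\bigr]$ is, at each order in $\epsilon$, a finite sum of $\cB[q]$ for polynomials $q$ with tree coefficients. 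Linearity of $\cB$ together with \eqref{eq:Bharmonic} then shows that $H$ annihilates it.
\end{itemize}

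Finally, the recursion alone does not determine $\tilde\Sigma$ without the initial condition. Uniqueness follows because at each order $\epsilon^\ell$ the right-hand side involves only lower-order coefficients of $\tilde\Sigma$ (since $\p_\zeta\tilde\Sigma$ has no $\epsilon^0$ term). A backward heat equation with polynomial data can therefore be integrated in $\theta$ order by order from $\theta=0$. This also shows that the solution of \eqref{eq:recursion} coincides with the $\tilde\Sigma$ of \eqref{eq:matchingexp}.
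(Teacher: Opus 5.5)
Your proposal is correct and follows the paper's proof essentially step for step. You apply $\p_\theta+\tfrac12\p_\zeta^2$ to both sides of \eqref{eq:matchingG}, use that the left-hand side and $\cG(y;\cdot)$ at fixed $y$ satisfy the heat equation, expand by the chain rule, and divide by the invertible series $\p_y\cG=G=e^K$. Your additional remarks on the formal setting and on uniqueness of the polynomial solution are sound, and they correspond to what the paper establishes separately in Corollary \ref{cor:determined}.
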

\begin{proof}
Write the matching condition \eqref{eq:matchingG} as
$\cL = \cG(\tilde \Sigma)$, where $\cL$ denotes its left-hand side.
By \eqref{eq:Bharmonic}, $\cL$ and, at each fixed $y$, $\cG(y)$
satisfy the heat equation, each being, order by order, $\cB$ of a
polynomial in $a$. Now $\cG(\tilde\Sigma)$ depends on
$(\zeta,\theta)$ both through the explicit dependence of $\cG$ and
through the argument $y=\tilde \Sigma(\zeta,\theta)$, so, by the chain
rule,
\[
\p_\theta\cL=\p_y\cG\;\p_\theta\tilde \Sigma+\p_\theta\cG,
\qquad
\p_{\zeta}^{2} \cL
=\p_y\cG\;\p_{\zeta}^{2}\tilde \Sigma
+\p_y^2\cG\,\bigl(\p_\zeta\tilde \Sigma\bigr)^2
+2\,\p_y\p_\zeta\cG\;\p_\zeta\tilde \Sigma
+\p_{\zeta}^{2}\cG ,
\]
with every derivative of $\cG$ evaluated at $y=\tilde \Sigma$.
Combining,
\[
0=\left( \p_\theta+\tfrac12\p_{\zeta}^{2} \right)\cL
=\p_y \cG\,\left(\p_\theta \tilde \Sigma +\tfrac12\p_{\zeta}^{2} {\tilde \Sigma}\right)
+\tfrac12\,\p_y^2\cG\,\bigl(\p_\zeta\tilde \Sigma\bigr)^2
+\p_y\p_\zeta \cG\;\p_\zeta\tilde \Sigma
+\left(\p_\theta \cG+\tfrac12\p_{\zeta}^{2}\cG\right),
\]
where the last bracket vanishes. Since $\p_y\cG=G=e^K$ has constant term
$1$, it is invertible as a formal power series. Dividing by $\p_y\cG$
and using $\p_y^2\cG/\p_y\cG=\p_yK$ and
$\p_y\p_\zeta\cG/\p_y\cG=\p_\zeta K$ gives \eqref{eq:recursion}.  The initial condition is \eqref{eq:thetazero}.
\end{proof}

\subsection{Application to the BG smile expansion}

Write
\[
\tilde \Sigma=\sum_{j\ge1}\tilde \Sigma_j ,
\]
where $\tilde \Sigma_j$ collects the products $\btau=\tau_1\cdots\tau_j$
of exactly $j$ trees, $\tau_i\in\mathcal{T}_{\ell_i}$, of total order
$\ell:=\ell_1+\cdots+\ell_j$, each multiplied by its prefactor
$c_{\btau}$; in particular
$\tilde \Sigma_1=\cB\bigl[\tilde\psi/\lambda_a\bigr]$.

\begin{remark}\label{rem:leafcontent}
The symmetry factors factor out of the whole computation.  Let
$\bar c_{\btau}$ denote the prefactor obtained with every $w_\tau$
replaced by $1$, which by \eqref{eq:rtau} depends on the $\tau_i$ only
through their leaf contents $(m_i,x_i)$, and put
$w_{\btau}:=w_{\tau_1}\cdots w_{\tau_j}$.  Then
\beq
c_{\btau}=w_{\btau}\,\bar c_{\btau}.
\label{eq:weightsout}
\eeq
Indeed, \eqref{eq:matchingexp} is an identity between polynomials in the
tree symbols with tree-free coefficients, and determines $\tilde\Sigma$
from $\tilde\psi$.  The substitution $\tau\mapsto w_\tau\,\tau$ respects
sums and products and commutes with $\cB$, and carries $\tilde\psi$
with unit weights to $\tilde\psi$; applied to both sides of
\eqref{eq:matchingexp}, it therefore carries the solution
$\tilde\Sigma$ with unit weights to the solution with the weights
$w_\tau$, multiplying the prefactor of $\btau$ by $w_{\btau}$.
In an implementation, a forest is thus specified by the leaf contents
and weights of its trees, and the weights enter only at the end.
\end{remark}
The following corollary of Theorem \ref{thm:recursion} shows how to
compute the prefactors of products of trees in the BG smile expansion.
\begin{cor}\label{cor:determined}
For each $j\ge2$, the $j$-tree part $\tilde \Sigma_j$ of the smile
satisfies a heat equation with zero initial condition,
\begin{equation}
\bigl(\p_\theta+\tfrac12\,\p_{\zeta}^{2}\bigr)\,\tilde \Sigma_j=S_j,
\qquad
\tilde \Sigma_j(\zeta,0)=0 ,
\label{eq:layer}
\end{equation}
whose source $S_j$ is the
 $j$-tree part of
 \beq
 \label{eq:recursionRHS}
 -\,\p_\zeta \tilde \Sigma\,\left[\tfrac12\,\p_y K\,\p_\zeta \tilde \Sigma+\p_\zeta K\,\right]_{y=\tilde \Sigma(\zeta,\theta)}
,
\eeq
the right-hand side of \eqref{eq:recursion}, and depends only on
$\tilde \Sigma_i$ with $i < j$ and the normalized cumulants $K_m$ of
Proposition \ref{prop:K}.  Since \eqref{eq:layer} has exactly one
polynomial solution at each order in $\epsilon$, the recursion determines,
by induction on $j$, the whole of
$\tilde \Sigma$ and with it every prefactor $c_{\btau}$ of the BG
expansion.
\end{cor}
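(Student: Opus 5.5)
The plan is to project the nonlinear heat equation of Theorem \ref{thm:recursion} onto the $j$-tree component. I would grade every quantity by the number of tree factors, which is legitimate because the tree symbols are formal commuting variables. The operator $\p_\theta+\tfrac12\p_\zeta^2$ acts only on the coefficient polynomials in $(\zeta,\theta)$, so it preserves this grading. The $j$-tree part of the left-hand side of \eqref{eq:recursion} is therefore $(\p_\theta+\tfrac12\p_\zeta^2)\tilde\Sigma_j$. For the initial condition, Remark \ref{rem:thetazero} shows that $\tilde\Sigma(\zeta,0)=\tilde\psi(\zeta)/\lambda_\zeta$ is linear in the trees. Hence $\tilde\Sigma_j(\zeta,0)=0$ for $j\ge2$, and $\tilde\Sigma_1(\zeta,0)=\tilde\psi(\zeta)/\lambda_\zeta$.

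Next I would show that the $j$-tree part $S_j$ of \eqref{eq:recursionRHS} involves only $\tilde\Sigma_i$ with $i<j$. Write $K=\sum_m K_m y^m$, so that $\p_yK=\sum_m mK_m y^{m-1}$ and $\p_\zeta K=\sum_m \p_\zeta K_m\, y^m$. The $K_m$ are tree-free, so substituting $y=\tilde\Sigma$ gives sums of products of powers of $\tilde\Sigma$ with tree-free coefficients. The two terms of \eqref{eq:recursionRHS} then have the following tree counts.

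- The term $-\tfrac12\,\p_yK\,(\p_\zeta\tilde\Sigma)^2$ contains at least two factors of $\tilde\Sigma$, from the two copies of $\p_\zeta\tilde\Sigma$, each with at least one tree.
- The term $-\p_\zeta\tilde\Sigma\,\p_\zeta K$ contains at least two factors of $\tilde\Sigma$ as well. Indeed $\p_\zeta K$ has no $y^0$ term, since $K_0=0$ in Proposition \ref{prop:K}, so $\p_\zeta K$ evaluated at $y=\tilde\Sigma$ already carries at least one power of $\tilde\Sigma$.

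Every monomial of the right-hand side is thus a product of at least two factors $\tilde\Sigma_{i_1},\dots,\tilde\Sigma_{i_p}$, possibly differentiated in $\zeta$, with $p\ge2$ and $i_1+\cdots+i_p=j$. Each $i_r\ge1$, so each $i_r\le j-1$. In particular $S_1=0$, so $\tilde\Sigma_1$ solves the homogeneous equation, in agreement with $\tilde\Sigma_1=\cB[\tilde\psi/\lambda_a]$ from \eqref{eq:Bharmonic}.

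Finally, I would prove uniqueness of polynomial solutions at each order in $\epsilon$. At fixed order, every quantity is a polynomial in $(\zeta,\theta)$ with tree coefficients. Consider $u$, the difference of two solutions of \eqref{eq:layer}: it satisfies $\p_\theta u=-\tfrac12\p_\zeta^2u$ and $u(\zeta,0)=0$. Expand $u=\sum_n u_n(\zeta)\theta^n$. The equation gives $(n+1)u_{n+1}=-\tfrac12u_n''$, so $u_0=0$ forces every $u_n=0$.

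For existence, the explicit solution is
\[
\tilde\Sigma_j=\int_0^\theta e^{-\frac{\theta-s}{2}\p_\zeta^2}S_j(\cdot,s)\,ds .
\]
It is polynomial because $S_j$ is polynomial and the exponential operator truncates on polynomials. Alternatively, existence follows directly from the same coefficient recursion $(n+1)u_{n+1}=-\tfrac12u_n''+[\theta^n]S_j$. Induction on $j$, starting from $\tilde\Sigma_1$, then determines every $\tilde\Sigma_j$ and hence every $c_{\btau}$.

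The main obstacle is the bookkeeping needed to make the tree-count grading and the $\epsilon$-grading rigorous simultaneously. This means checking that, at a fixed order $\epsilon^\ell$, only finitely many $m$ and finitely many products contribute. That holds because each tree has order at least $1$, which bounds the number of factors by $\ell$. With this in place, the projection onto the $j$-tree part commutes with the chain-rule expression from Theorem \ref{thm:recursion}.
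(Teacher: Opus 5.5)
Your proposal is correct and follows essentially the same route as the paper: you project \eqref{eq:recursion} onto its $j$-tree part, use $K=\sum_{m\ge1}K_m\,y^m$ to show that every term of the source is a product of at least two factors $\tilde\Sigma_i$ (hence $i<j$), take the zero initial condition from Remark \ref{rem:thetazero}, and pin down the polynomial solution through the Taylor-coefficient recursion in $\theta$, which is exactly \eqref{eq:integrate}. Your difference-of-two-solutions uniqueness argument and the Duhamel formula for existence are repackagings of that same coefficient recursion. Existence is in any case automatic, since $\tilde\Sigma_j$ is already given by the matching condition.
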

\begin{proof}
Sort both sides of \eqref{eq:recursion} by the number of trees.  The
left-hand side is linear in $\tilde \Sigma$, so its $j$-tree part is the
left-hand side of \eqref{eq:layer}.  

By definition of the normalized cumulants, $K = \sum_{m\ge1}K_m\,y^m$.  Setting
$y=\tilde \Sigma(\zeta,\theta)$, \eqref{eq:recursionRHS} then becomes
\begin{equation}
-\p_\zeta\tilde \Sigma\,\left[\frac12 \sum_{m\ge 1}
m\,K_m\,
\tilde \Sigma^{m-1}\,\p_\zeta\tilde \Sigma
\;+\;\sum_{m\ge 1}\p_\zeta K_m\,\tilde \Sigma^m
\right].
\label{eq:sourceexpanded}
\end{equation}
Writing
$\tilde \Sigma=\sum_i\tilde \Sigma_i$,  \eqref{eq:sourceexpanded} is a sum of products of
at least two trees.  In the $j$-tree part, each factor therefore
carries fewer than $j$ trees: $S_j$ is built from the
$\tilde \Sigma_i$ with $i<j$ only, and $\tilde \Sigma_j$ satisfies
\eqref{eq:layer};  it follows from Remark \ref{rem:thetazero} that the prefactor of every product
of two or more trees vanishes at $\theta=0$.
To see that \eqref{eq:layer} has exactly one polynomial solution at each
order in $\epsilon$, note that, viewed as a polynomial in $\theta$,
$\tilde \Sigma_j(\zeta,\theta)$ is determined by its derivatives at
$\theta=0$.
These may be computed from
\eqref{eq:layer},
starting from $\tilde \Sigma_j(\zeta,0)=0$, as follows:
\begin{equation}
\tilde \Sigma^{(n)}_j:=\p_\theta^{\,n}\,\tilde \Sigma_j\bigr|_{\theta=0}
=\p_\theta^{\,n-1}S_j\bigr|_{\theta=0}
-\tfrac12\,\p_{\zeta}^{2}\,\tilde \Sigma^{(n-1)}_j,
\qquad n\ge1 ,
\label{eq:integrate}
\end{equation}
with $\tilde \Sigma^{(0)}_j=\tilde \Sigma_j(\zeta,0)=0$.
For each $n \geq 1$, $\tilde \Sigma^{(n)}_j$ is a sum of products of polynomials in $\zeta$ and trees.  Taylor expansion then gives the explicit expression
\beq
\label{eq:Sigmaj}
\tilde \Sigma_j(\zeta,\theta) = \sum_{n=1}^\infty\,\frac{\tilde \Sigma^{(n)}_j\,\theta^n}{n!}.
\eeq
The induction on $j$ starts from
$\tilde \Sigma_1=\cB\bigl[\tilde \psi/\lambda_a\bigr]$, which solves
\eqref{eq:recursion} for single trees.
\end{proof}
\begin{remark}
At each order in $\epsilon$, the sum in \eqref{eq:Sigmaj} is finite:
once $n-1$ exceeds the degree of $S_j$ in $\theta$,
\eqref{eq:integrate} gives
$\tilde \Sigma^{(n)}_j=-\tfrac12\,\p_{\zeta}^{2}\tilde \Sigma^{(n-1)}_j$,
which lowers the degree in $\zeta$ by two at each step and therefore
vanishes after finitely many steps.
\end{remark}

\begin{remark}\label{rem:timings}
We have implemented the algorithm \eqref{eq:integrate} in Python 3.12.  Its
output coincides with that of the moment computation of
\cite[Appendix A.1]{alos2020exponentiation} as far as we have compared
them, through order six.  The implementation is fast: on a MacBook Air
(M5, 24 GB RAM), it generates the expansion through order six in $0.086$\,s and
through order ten in $19.3$\,s (median of 5 runs, including imports and rendering).  Appendix~\ref{app:BG4} lists the
expansion through order four.
\end{remark}

\subsection{The BG expansion to  $\cO(\epsilon^2)$}\label{sec:example}
The second-order expansion \eqref{eq:BG2} was obtained in
\cite[Appendix A.1]{alos2020exponentiation} by matching Bell-polynomial
terms. Here is the same calculation using Corollary
\ref{cor:determined}.
The single tree prefactors are already given by Lemma \ref{lem:singleTree} and exhibited in Section \ref{sec:single2}, so we already have $\tilde \Sigma_1(\zeta,\theta)$.
At order $\epsilon^2$, the only term in $\tilde \Sigma_2(\zeta,\theta)$ is the
product $(\MXd)^2$.
The moment matching method of \cite{alos2020exponentiation} is equivalent
to solving for the prefactor of $(\MXd)^2$
using
\eqref{eq:matchingexp}, as follows:
Keeping only the terms relevant to the coefficient of
$\epsilon^2(\MXd)^2$, the CGF and the smile have the form
\[
\tilde\psi(a)=\epsilon\,\lambda_a a\,\MXd+\cdots,
\qquad
\tilde\Sigma=\epsilon\,c_{\MXd}\,\MXd
+\epsilon^2c_{(\MXd)^2}(\MXd)^2+\cdots,
\]
where the omitted terms do not contribute to $\epsilon^2(\MXd)^2$ and
$c_{\MXd}=\cB[a]$. On the left-hand side of
\eqref{eq:matchingexp}, the coefficient of $\epsilon^2(\MXd)^2$ comes from
$\tfrac12\tilde\psi^2$ and is
\[
\cB\left[\frac{1}{\lambda_a}\,
\frac12(\lambda_a a)^2\right]
=\frac12\cB[\lambda_a a^2].
\]
On the right-hand side, the term with $n=1$ contributes the unknown
$c_{(\MXd)^2}$, while the term with $n=2$ contributes
$\tfrac12c_{\MXd}^2\cB[\lambda_a]$. Hence
\[
c_{(\MXd)^2}
=\frac12\cB[\lambda_a a^2]
-\frac12\cB[a]^2\cB[\lambda_a].
\]
By \eqref{eq:Bheat},
\[
\tfrac12\,\cB\bigl[\lambda_a a^2\bigr]
=\tfrac14\bigl(\zeta^4-\zeta^3-6\theta \zeta^2+3\theta \zeta+3\theta^2\bigr),
\qquad
\tfrac12\,\cB[a]^2\,\cB[\lambda_a]
=\tfrac14\bigl(\zeta^4-\zeta^3-\theta \zeta^2\bigr).
\]
The quartic
and cubic terms cancel identically, giving
\begin{equation}
c_{(\MXd)^2}
=-\frac\theta4\,\bigl(5\zeta^2-3\zeta-3\theta\bigr)
=\frac1{4M}\left(-\frac{5k^2}{M^2}-\frac{2k}{M}+\frac3M+\frac14\right),
\label{eq:mystery}
\end{equation}
as in \eqref{eq:BG2}. This is
the mysterious cancellation referred to in \cite[Remark A.1]{alos2020exponentiation}.
On the other hand, we can apply Corollary \ref{cor:determined} with
$j=2$.  For $\MXd$, we have $r_{\MXd}(a)=a$, so $c_{\MXd}=\zeta$ and
$\p_\zeta c_{\MXd}=1$.  Also, from Proposition \ref{prop:K},
$K_1=\lambda_\zeta-\tfrac\theta2=\tfrac12\,\bigl(\kappa^2-\tfrac14-\theta\bigr)$
and $\p_\zeta K_1=\kappa$, so the $(\MXd)^2$-part of
\eqref{eq:sourceexpanded} is
\begin{align*}
S_2&=-\tfrac12\,K_1\,\bigl(\p_\zeta c_{\MXd}\bigr)^2
-\p_\zeta K_1\;c_{\MXd}\,\p_\zeta c_{\MXd}
=-\Bigl[\tfrac12\,K_1+\zeta\,\kappa\Bigr]
=-\tfrac54\,\zeta^2+\tfrac34\,\zeta+\tfrac\theta4.
\end{align*}
No quartic or cubic terms appear.
Using \eqref{eq:integrate}, we may compute derivatives of the prefactor at $\theta=0$ as follows:
\[
\p_\theta\,c_{(\MXd)^2}(\zeta,0)=-\tfrac54\,\zeta^2+\tfrac34\,\zeta ,
\qquad
\p_\theta^{\,2}c_{(\MXd)^2}(\zeta,0)
=\tfrac14-\tfrac12\,\p_{\zeta}^{2}\bigl(-\tfrac54\,\zeta^2+\tfrac34\,\zeta\bigr)
=\tfrac32.
\]
Taylor expansion in $\theta$ then gives
\begin{equation}
c_{(\MXd)^2}
=-\frac\theta4\,\bigl(5\zeta^2-3\zeta-3\theta\bigr)
=\frac1{4M}\left(-\frac{5k^2}{M^2}-\frac{2k}{M}+\frac3M+\frac14\right),
\label{eq:mystery2}
\end{equation}
in agreement with the moment matching method.
\subsubsection{The origin of the mysterious cancellation}
Expanding \eqref{eq:Bheat},
\[
\cB[q]=q(\zeta)-\tfrac\theta2\,\p_{\zeta}^{2}q(\zeta)
+ \cdots :
\]
each successive correction raises the exponent of $\theta$ by one and
applies two additional derivatives with respect to $\zeta$. 
Thus
\[
\tfrac12\,\cB[\lambda_a\,a^2]
=\tfrac12\,\lambda_\zeta\,\zeta^2+\theta\,q_1(\zeta,\theta),
\qquad
\tfrac12\,\cB[a]^2\,\cB[\lambda_a]
=\tfrac12\,\zeta^2\,\lambda_\zeta+\theta\,q_2(\zeta,\theta),
\]
where $q_1$ and $q_2$ are polynomials of degree at most two in $\zeta$.
The two leading polynomials are identical, so their difference
$c_{(\MXd)^2}=\theta\,(q_1-q_2)$ is $\theta$ times a polynomial of
degree at most two in $\zeta$.  This is the origin of the cancellation in
\cite[Remark A.1]{alos2020exponentiation}.

On the other hand, in the recursion algorithm, the cancellation of leading terms is implicit in the construction of the source term $S_j$.
By \eqref{eq:Bharmonic},
every moment satisfies the heat equation.  Apply the operator $\p_\theta+\tfrac12\,\p_{\zeta}^{2}$ to the two
moment expressions.  The quartic moment
$\tfrac12\,\cB[\lambda_a\,a^2]$ is annihilated outright.  In the
product $\tfrac12\,\cB[a]^2\,\cB[\lambda_a]$, each factor is
annihilated, and only the following cross terms survive:
\[
\bigl(\p_\theta+\tfrac12\,\p_{\zeta}^{2}\bigr)
\left\{\tfrac12\,\cB[a]^2\,\cB[\lambda_a]\right\}
=\tfrac12\,\bigl(\p_\zeta\cB[a]\bigr)^2\,\cB[\lambda_a]
+\cB[a]\;\p_\zeta\cB[a]\;\p_\zeta\cB[\lambda_a]
=\tfrac12\,K_1+\zeta\,\kappa ,
\]
since $\cB[a]=\zeta$ and $\cB[\lambda_a]=K_1$.  Subtracting,
\[
\bigl(\p_\theta+\tfrac12\,\p_{\zeta}^{2}\bigr)\,c_{(\MXd)^2}
=-\Bigl[\tfrac12\,K_1+\zeta\,\kappa\Bigr]=S_2.
\]
Since $S_2$ is quadratic in $\zeta$ and the integration
\eqref{eq:integrate} never raises the degree, $c_{(\MXd)^2}$ is
quadratic as well.
The subtraction that moment
matching performs order by order is made once, in Proposition
\ref{prop:K}, for every order at once.
\subsection{The degree structure of the prefactors}
The computation of Section \ref{sec:example} repeats at every order:
each term in the source is a product of a cumulant and prefactors of smaller
products, and the bookkeeping of the powers of $\theta$ and $\zeta$ is
the same at every step.  The result is the degree structure of the BG
expansion.
\begin{prop}\label{prop:degree}
For every product $\btau$ of $j$ trees of total order $\ell$,
\beq
c_{\btau}=(-\theta)^{j-1}\,q_{\btau}(\zeta,\theta),
\label{eq:structure}
\eeq
where $q_{\btau}$ is a polynomial of degree exactly $\ell$ in $\zeta$.
Moreover, the coefficient of $\zeta^\ell$ in
$q_{\btau}$ is positive.
\end{prop}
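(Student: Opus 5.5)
The plan is to argue by induction on the number $j$ of trees, using the layer equations of Corollary \ref{cor:determined}. I would work with a \emph{weighted degree} in which $\zeta$ has weight $1$ and $\theta$ has weight $2$. Two facts make this grading natural. First, the heat operator $\p_\theta+\tfrac12\p_\zeta^2$ lowers weighted degree by exactly $2$. Second, $\cB=e^{-\frac\theta2\p_\zeta^2}$ preserves weighted degree by \eqref{eq:Bheat}.

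I would prove the following strengthened statement. For every product $\btau$ of $j$ trees of total order $\ell$,
\[
c_{\btau}=(-\theta)^{j-1}q_{\btau}(\zeta,\theta),
\]
where $q_{\btau}$ has weighted degree at most $\ell$ and the coefficient of $\zeta^\ell$ in $q_{\btau}$ is positive. Because $\theta$ has weight $2$, a weighted degree at most $\ell$ forces $q_{\btau}$ to have ordinary degree at most $\ell$ in $\zeta$. With a positive $\zeta^\ell$ coefficient, the degree is exactly $\ell$, which is the proposition.

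\emph{Base case $j=1$.} By \eqref{eq:rtau} and \eqref{eq:treegrading}, $r_\tau$ has degree $2(m-1)+x=\ell$ in $a$, with leading coefficient $2^{-(m-1)}>0$. Hence $c_\tau=w_\tau\cB[r_\tau]$ has weighted degree $\ell$. Its $\theta^0$ part is $w_\tau r_\tau(\zeta)$, so the coefficient of $\zeta^\ell$ is $w_\tau2^{-(m-1)}>0$.

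\emph{Inductive step: the source.} I expand the source $S_j$ using \eqref{eq:sourceexpanded} and \eqref{eq:Km}. The cumulants satisfy:
\begin{itemize}
\item $K_m=(-\theta)^{m-1}\bigl(\tfrac{\kappa^2}{2}-\tfrac{\theta}{2m}\bigr)$ has weighted degree $2m$, with $\theta^{m-1}\zeta^2$ coefficient $\tfrac12(-1)^{m-1}$;
\item $\p_\zeta K_m=(-\theta)^{m-1}\kappa$ has weighted degree $2m-1$, with $\theta^{m-1}\zeta$ coefficient $(-1)^{m-1}$.
\end{itemize}
A typical term of $S_j$ is $\tfrac12 mK_m\,\tilde\Sigma^{m-1}(\p_\zeta\tilde\Sigma)^2$ or $\p_\zeta K_m\,\tilde\Sigma^m\p_\zeta\tilde\Sigma$, restricted to factors carrying $j_1,\dots,j_{m+1}$ trees of orders $\ell_i$, with $\sum j_i=j$ and $\sum\ell_i=\ell$. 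By induction each factor carries $(-\theta)^{j_i-1}$. Counting powers of $-\theta$ gives $(m-1)+\sum(j_i-1)=j-2$, so $S_j=(-\theta)^{j-2}s_j$.

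Adding weights, and noting that the two $\p_\zeta$'s subtract $2$, the weighted degree of $s_j$ is at most
\[
2m+\sum_i\bigl(\ell_i+2(j_i-1)\bigr)-2-2(j-2)=\ell .
\]
For the top coefficient, I take the $\zeta^\ell$ coefficient of $s_j$ at $\theta^{0}$. Every contribution has the form
\[
-\bigl(\text{positive constant}\bigr)\times\prod(\text{top coefficients of }q_i)\times\prod \ell_i ,
\]
where the factors $\ell_i$ arise from differentiating $\zeta^{\ell_i}$, and all $\ell_i\ge1$. The signs $(-1)^{m-1}$ coming from $K_m$ are exactly those already absorbed into $(-\theta)^{j-2}$. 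So every contribution to this coefficient is strictly negative and there is no cancellation. This sign bookkeeping is the step I expect to be the main obstacle. I would check it first against the example $S_2=-\tfrac54\zeta^2+\cdots$.

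\emph{Inductive step: integration.} By Duhamel,
\[
\tilde\Sigma_j(\theta)=\int_0^\theta e^{-\frac{\theta-s}{2}\p_\zeta^2}S_j(s)\,ds .
\]
This solves \eqref{eq:layer} and is its unique polynomial solution. The heat semigroup preserves weighted degree, and integrating in $\theta$ raises both the weighted degree and the power of $\theta$ appropriately. Since $S_j$ is divisible by $\theta^{j-2}$, the solution is divisible by $\theta^{j-1}$ and has weighted degree at most $\ell+2(j-1)$, so $q_{\btau}$ has weighted degree at most $\ell$.

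The $\theta^{j-1}\zeta^\ell$ coefficient comes only from the lowest-order term $\int_0^\theta(-s)^{j-2}\,ds\;s_j(\zeta,0)$; any correction from $\p_\zeta^2$ lowers the $\zeta$-degree. This term equals $(-\theta)^{j-1}\cdot\bigl(-s_j(\zeta,0)\bigr)/(j-1)$, so the top coefficient of $q_{\btau}$ is the negative of a negative number divided by $j-1$, hence positive.

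Finally, Remark \ref{rem:leafcontent} shows that the weights $w_{\btau}>0$ only rescale each prefactor, so restricting to the coefficient of a particular $\btau$ preserves positivity.
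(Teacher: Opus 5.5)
Your proof is correct and follows the paper's argument. Both proceed by induction on $j$ through the layer equation \eqref{eq:layer}, count the powers of $-\theta$ supplied by the cumulants \eqref{eq:Km} and the sub-products, show that every top-degree contribution to the source has the same negative sign, and then integrate. Your weighted grading ($\zeta$ of weight $1$, $\theta$ of weight $2$) and the explicit Duhamel step $\int_0^\theta(-s)^{j-2}\,ds=-(-\theta)^{j-1}/(j-1)$ make precise two points the paper leaves implicit: the $\zeta^\ell$-coefficient of $q_{\btau}$ is a constant, so ``positive'' is meaningful, and integration turns the sign $-(-\theta)^{j-2}$ of the source into $+(-\theta)^{j-1}$.
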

\begin{proof}
The proof proceeds by induction on $j$.  For $j=1$,
$c_\tau=w_\tau\,\cB[r_\tau]$; since $w_\tau>0$ and the coefficient of
$a^\ell$ in $r_\tau(a)$ is positive, the statement holds for $j=1$.
Suppose the statement holds for products of fewer than $j$ trees, and
consider a product $\btau$ in $\tilde \Sigma_j$ of total order $\ell$.
By Corollary \ref{cor:determined}, $c_{\btau}$ is obtained from the
$\btau$-part of the source \eqref{eq:sourceexpanded} by the
recursion \eqref{eq:integrate}.  

Write $\btau=\tau_1\,\btau_2$, with
$\tau_1$ a single tree of order $\ell_1$ and $\btau_2$ the product of
the remaining $j-1$ trees.  This splitting contributes to the $m=1$
terms of \eqref{eq:sourceexpanded}, up to a positive combinatorial
factor,
\[
-\Bigl[\,K_1\;\p_\zeta c_{\tau_1}\,\p_\zeta c_{\btau_2}
+\p_\zeta K_1\,\bigl(c_{\tau_1}\,\p_\zeta c_{\btau_2}
+c_{\btau_2}\,\p_\zeta c_{\tau_1}\bigr)\Bigr] ,
\]
with $K_1=\tfrac12\,(\zeta^2-\zeta-\theta)$ and $\p_\zeta K_1=\kappa$.
By the inductive hypothesis, $c_{\tau_1}$ has degree $\ell_1$ and
$c_{\btau_2}=(-\theta)^{\,j-2}\,q_{\btau_2}$ with
$\deg_\zeta q_{\btau_2}=\ell-\ell_1$, both with positive leading
coefficients, so this contribution is $-(-\theta)^{\,j-2}$ times a
polynomial of degree $\ell$ in $\zeta$ with positive leading
coefficient.

The remaining contributions to the $\btau$-part of
\eqref{eq:sourceexpanded} have the same form.  The terms with index $m$
are products of $m+1$ factors $\tilde\Sigma$ or $\p_\zeta\tilde\Sigma$,
so their $\btau$-part is a sum over splittings
$\btau=\btau_1\cdots\btau_{m+1}$ into sub-products of
$i_1,\dots,i_{m+1}$ trees, $i_1+\cdots+i_{m+1}=j$, of the terms
\[
-K_m\;\p_\zeta c_{\btau_1}\,\p_\zeta c_{\btau_2}\,
c_{\btau_3}\cdots c_{\btau_{m+1}}
\qquad\text{and}\qquad
-\p_\zeta K_m\;\p_\zeta c_{\btau_1}\,
c_{\btau_2}\cdots c_{\btau_{m+1}} ,
\]
again with positive combinatorial factors.  By \eqref{eq:Km},
$K_m=(-\theta)^{m-1}\bigl(\tfrac12\zeta^2+\cdots\bigr)$ and
$\p_\zeta K_m=(-\theta)^{m-1}\bigl(\zeta+\cdots\bigr)$, and by the
inductive hypothesis $c_{\btau_t}=(-\theta)^{\,i_t-1}q_{\btau_t}$, with
$q_{\btau_t}$ of degree the order $\ell_t\ge1$ of $\btau_t$ and positive
leading coefficient.  Each term above is therefore $-(-\theta)^{\,j-2}$
times a polynomial of degree $\ell$ in $\zeta$ with positive leading
coefficient: the exponents of $-\theta$ add up to
$(m-1)+\sum_t(i_t-1)=j-2$, the orders $\ell_t$ add up to $\ell$, and each
$\p_\zeta$ lowers the degree by one, which $K_m$ or $\p_\zeta K_m$
restores.  Summing over $m$ and over splittings, the $\btau$-part of
the source is $-(-\theta)^{\,j-2}$ times a polynomial of degree exactly
$\ell$ in $\zeta$ with positive leading coefficient.  Finally, the
integration \eqref{eq:integrate} supplies one factor of $\theta$ and
never raises the degree in $\zeta$, giving \eqref{eq:structure}.
\end{proof}
\noindent
With the substitutions $\zeta \mapsto \frac k M+\frac12$ and $\theta \mapsto \frac 1M$, the following corollary is immediate.
\begin{cor}\label{cor:degree}
In the smile expansion \eqref{eq:BG}, for every $\ell\ge1$, each
prefactor in $a_\ell(k)$, of a single tree or of a product of trees, is
a polynomial in $k$ of degree exactly $\ell$.  
\end{cor}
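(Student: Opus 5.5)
The plan is to take Proposition \ref{prop:degree} and substitute $\zeta=\tfrac12+\tfrac kM$, $\theta=\tfrac1M$, then track the degree in $k$. For a product $\btau$ of $j$ trees of total order $\ell$, Proposition \ref{prop:degree} gives $c_{\btau}=(-\theta)^{j-1}q_{\btau}(\zeta,\theta)$. Here $q_{\btau}$ is a polynomial of degree exactly $\ell$ in $\zeta$, with positive coefficient of $\zeta^\ell$. Since $\btau$ contributes to $a_\ell(k)$ exactly when its total order is $\ell$, it is enough to show that $c_{\btau}$, viewed as a function of $k$, is a polynomial of degree exactly $\ell$.

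First I would write $q_{\btau}(\zeta,\theta)=\sum_{i\le\ell}p_i(\theta)\,\zeta^i$ with polynomial coefficients $p_i(\theta)$. The leading coefficient $p_\ell(\theta)$ is a polynomial in $\theta$. The statement ``positive coefficient of $\zeta^\ell$'' has to be read so that $p_\ell(1/M)\neq0$. The cleanest reading, which the inductive proof of Proposition \ref{prop:degree} supports, is that $p_\ell$ is a positive constant. In that induction the top-degree coefficient comes from the top-degree parts $\tfrac12\zeta^2$ and $\zeta$ of $K_m$ and $\p_\zeta K_m$. It also comes from the leading coefficients of the factors, and none of these depend on $\theta$. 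The integration step \eqref{eq:integrate} takes the $\theta$-independent leading part of the source at $\theta=0$ and multiplies it by $\theta$.

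I would therefore re-check, or strengthen, the induction so that it shows the coefficient of $\zeta^\ell$ in $q_{\btau}$ is independent of $\theta$ and positive. This is the main point needing care. The subtlety is in \eqref{eq:integrate}: the higher $\theta$-derivatives $\p_\theta^{n-1}S_j|_{\theta=0}$ might carry $\zeta^\ell$ terms with extra powers of $\theta$. By Corollary \ref{cor:cumulants} and \eqref{eq:Km}, the $\zeta$-degree-raising parts of $K_m$ are accompanied by exactly the power $(-\theta)^{m-1}$ that the inductive bookkeeping assigns. Terms with surplus powers of $\theta$ have strictly lower $\zeta$-degree: $\theta$ appears in $K_m$ only through the $-\tfrac{\theta}{2m}$ term, and in $K_1$ only through $-\tfrac\theta2$. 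Hence they cannot reach $\zeta^\ell$. The $-\tfrac12\p_\zeta^2$ term in \eqref{eq:integrate} also lowers the degree.

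Given this, substituting $\zeta=\tfrac12+k/M$ makes $\zeta^i$ a polynomial in $k$ of degree $i$, and the powers of $\theta=1/M$ are constants in $k$. So $c_{\btau}$ is a polynomial in $k$ of degree at most $\ell$. Its $k^\ell$ coefficient is $(-1/M)^{j-1}p_\ell\,M^{-\ell}\neq0$. Hence the degree is exactly $\ell$, which proves Corollary \ref{cor:degree}, including the single-tree case $j=1$.
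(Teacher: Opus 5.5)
Your proof is correct and follows the paper's route: substitute $\zeta=\tfrac12+k/M$ and $\theta=1/M$ into Proposition \ref{prop:degree} and read off the $k^\ell$ coefficient. Your extra check, that the $\zeta^\ell$ coefficient of $q_{\btau}$ is a positive constant rather than a polynomial in $\theta$ that might vanish at $\theta=1/M$, is exactly what the paper's word ``positive'' tacitly asserts, and your strengthened induction confirms it; the one imprecision is minor: for $j\ge3$ the integration step turns a source whose $\zeta^\ell$ coefficient is $-(-\theta)^{j-2}\sigma$, with $\sigma>0$, into $(-\theta)^{j-1}\sigma/(j-1)$, which is more than ``multiplying by $\theta$'' the source at $\theta=0$.
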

\begin{remark}\label{rem:logmechanism}
The mysterious cancellation is performed once and for all by the logarithm
$K=\log G$.
The recursion \eqref{eq:recursion} produces each
prefactor without generating higher-degree terms: the cumulants are quadratic in
$\zeta$, so at order $\epsilon^\ell$ the source term, and with it every
prefactor, has degree exactly $\ell$, while each additional tree in a
product costs one power of $\theta$ (Proposition \ref{prop:degree}).
\end{remark}
\begin{remark}\label{rem:generic}
Beyond the closed form \eqref{eq:Kclosed}, which uses that $\lambda_a$ is
quadratic, the recursion uses nothing about diamond trees except the
overall factor $\lambda_a$ that each carries in the CGF: Theorem
\ref{thm:recursion} holds for arbitrary
polynomials $r_\tau$.  
\end{remark}
\section*{Declarations}
\subsection*{AI use}
The authors made extensive use of generative artificial
intelligence tools (Claude Opus 5 and Claude Fable 5.1)
throughout the preparation of this manuscript. These tools were used to
assist with writing and editing the manuscript, exploring mathematical
ideas, proposing proof strategies, checking intermediate arguments, and
suggesting improvements to existing proofs. The AI tools were used solely
as assistants; all mathematical arguments were developed, critically
assessed, and independently verified by the authors. The authors assume
responsibility for all content.
\bibliographystyle{alpha}
\bibliography{MagicStrikes}

\clearpage
\appendix
\section{The BG expansion through order four}\label{app:BG4}
We give all coefficients of the total implied variance expansion
\eqref{eq:BG} through order four. We
use the variables $(\zeta,\theta)$ of Lemma~\ref{lem:Hermite}; substituting
$\zeta=\tfrac12+k/M$ and $\theta=1/M$ gives the expansion in $k$ and $M$.  Higher order trees and prefactors through order six are provided in the file \texttt{bg\_coefficients\_order\_6.txt}.
Consistent with Proposition \ref{prop:degree}, the prefactor of a product $\btau$ of $j$ trees of total order $\ell$ has the form $c_{\btau}=(-\theta)^{j-1}\,q_{\btau}(\zeta,\theta)$, with $q_{\btau}$ of degree exactly $\ell$ in $\zeta$.


\vspace{-25pt}
\begin{align}\
\intertext{\textbf{Order 1}}
 a_{1}(\zeta,\theta) &=\zeta\,\MXd.
\intertext{\textbf{Order 2}}
  a_{2}(\zeta,\theta) &=
\left(\zeta^{2} - \theta\right)\,\MXdXd+
\frac{1}{4}\left(\zeta^{2} - \zeta - \theta\right)\,\MMd
+\frac{1}{4}\,\theta\left(- 5\,\zeta^{2} + 3\,\zeta + 3\,\theta\right)\,\bigl(\MXd\bigr)^{2}.
\intertext{\textbf{Order 3}}
a_{3}(\zeta,\theta) &=
 \zeta\left(\zeta^{2} - 3\,\theta\right)\,\MXdXdXd
 +\frac{1}{2}\left(\zeta^{3} - \zeta^{2} - 3\,\zeta\,\theta + \theta\right)\,\left[\MXdMd +\frac{1}{2}\,\MMdXd\right]\\
&\qquad 
+\frac{1}{2}\,\theta\left(- 8\,\zeta^{3} + 5\,\zeta^{2} + 14\,\zeta\,\theta - 3\,\theta\right)\,\MXdXd\,\MXd\\
&\qquad
+\frac{1}{8}\,\theta\left(- 8\,\zeta^{3} + 10\,\zeta^{2} + 14\,\zeta\,\theta - 3\,\zeta - 6\,\theta\right)\,\MMd\,\MXd\\
&\qquad +\frac{1}{8}\,\theta^{2}\left(26\,\zeta^{3} - 24\,\zeta^{2} - 32\,\zeta\,\theta + 5\,\zeta + 10\,\theta\right)\,\bigl(\MXd\bigr)^{3}.
\intertext{\textbf{Order 4}}
 a_{4}(\zeta,\theta) &=\left(\zeta^{4} - 6\,\zeta^{2}\,\theta + 3\,\theta^{2}\right)\,\MXdXdXdXd
 +\frac{1}{8}\left(
  \zeta^{4} - 2\,\zeta^{3} - 6\,\zeta^{2}\,\theta + \zeta^{2} + 6\,\zeta\,\theta 
  + 3\,\theta^{2} - \theta
\right)\,\MMdMd\\
&\qquad+\frac{1}{2}\left(\zeta^{4} - \zeta^{3} - 6\,\zeta^{2}\,\theta + 3\,\zeta\,\theta + 3\,\theta^{2}\right)\,\left[ \MXdXdMd 
+\frac{1}{2}\,\MXdMXdd 
  +\MXdMdXd +\frac{1}{2}\,\MMdXdXd
\right]\\
&\qquad
+\frac{1}{2}\,\theta\left(- 6\,\zeta^{4} + 4\,\zeta^{3} + 21\,\zeta^{2}\,\theta - 7\,\zeta\,\theta - 7\,\theta^{2}\right)\,\bigl(\MXdXd\bigr)^{2}\\
&\qquad
+\frac{1}{64}\,\theta\left(
  - 12\,\zeta^{4} + 24\,\zeta^{3} + 42\,\zeta^{2}\,\theta - 15\,\zeta^{2} - 42\,\zeta\,\theta 
  + 3\,\zeta - 14\,\theta^{2} + 9\,\theta
\right)\,\bigl(\MMd\bigr)^{2}
\\
&\qquad
+\frac{1}{8}\,\theta\left(
  - 12\,\zeta^{4} + 16\,\zeta^{3} + 42\,\zeta^{2}\,\theta - 5\,\zeta^{2} - 28\,\zeta\,\theta 
  - 14\,\theta^{2} + 3\,\theta
\right)\,\MMd\,\MXdXd
\\
&\qquad
+\frac{1}{2}\,\theta\left(- 11\,\zeta^{4} + 7\,\zeta^{3} + 42\,\zeta^{2}\,\theta - 15\,\zeta\,\theta - 15\,\theta^{2}\right)\,\MXdXdXd\,\MXd
\\
&\qquad
+\frac{1}{4}\,\theta\,
\left(
\begin{aligned}
   &- 11\,\zeta^{4} + 15\,\zeta^{3} + 42\,\zeta^{2}\,\theta - 5\,\zeta^{2}\\
   &\qquad - 29\,\zeta\,\theta - 15\,\theta^{2} + 3\,\theta
\end{aligned}
 \right)\,
\left[\MXdMd +\frac{1}{2}\,\MMdXd\right]\,\MXd
 \\
&\qquad
+\frac{1}{8}\,\theta^{2}\left(
\begin{aligned}
  &144\,\zeta^{4} - 146\,\zeta^{3} - 378\,\zeta^{2}\,\theta + 35\,\zeta^{2}\\
  &\qquad + 198\,\zeta\,\theta + 102\,\theta^{2} - 15\,\theta
\end{aligned}
\right)\,\MXdXd\,\bigl(\MXd\bigr)^{2}\\
&\qquad
+\frac{1}{32}\,\theta^{2}\left(
\begin{aligned}
  &144\,\zeta^{4} - 224\,\zeta^{3} - 378\,\zeta^{2}\,\theta + 107\,\zeta^{2}\\
  &\qquad + 294\,\zeta\,\theta - 15\,\zeta + 102\,\theta^{2} - 45\,\theta
\end{aligned}
\right)\,\MMd\,\bigl(\MXd\bigr)^{2}\\
&\qquad +\frac{1}{64}\,\theta^{3}\left(
\begin{aligned}
  &- 692\,\zeta^{4} + 864\,\zeta^{3} + 1390\,\zeta^{2}\,\theta - 327\,\zeta^{2}\\
  &\qquad - 878\,\zeta\,\theta + 35\,\zeta - 306\,\theta^{2} + 105\,\theta
\end{aligned}
\right)\,\bigl(\MXd\bigr)^{4}.
\end{align}

\end{document}